\documentclass{amsart}

\usepackage[utf8]{inputenc}
\usepackage[T1]{fontenc}
\usepackage{amssymb}
\usepackage{mathtools}
\usepackage{booktabs}
\usepackage{multirow}
\usepackage{graphicx}
\usepackage{xcolor}
\usepackage{tikz}
\usepackage[ruled,vlined,algo2e]{algorithm2e}
\usepackage[protrusion=true,expansion=false]{microtype}
\usepackage[round,authoryear]{natbib}
\usepackage{hyperref}
\hypersetup{
	colorlinks=true,
	linkcolor=blue,
	filecolor=magenta,
	urlcolor=cyan,
	citecolor=blue
}
\usepackage[capitalize]{cleveref}
\usepackage[text={440pt,575pt},headheight=9pt,centering]{geometry}

\usepackage{aliascnt}

\theoremstyle{plain}
\newtheorem{theorem}{Theorem}[section]

\newaliascnt{proposition}{theorem}

\aliascntresetthe{proposition}

\newaliascnt{lemma}{theorem}
\newtheorem{lemma}[lemma]{Lemma}
\aliascntresetthe{lemma}

\newaliascnt{corollary}{theorem}
\newtheorem{corollary}[corollary]{Corollary}
\aliascntresetthe{corollary}

\theoremstyle{definition}
\newaliascnt{definition}{theorem}

\aliascntresetthe{definition}

\newaliascnt{example}{theorem}

\aliascntresetthe{example}

\newaliascnt{remark}{theorem}

\aliascntresetthe{remark}

\newcommand{\RR}{\ensuremath{\mathbb{R}}}
\newcommand{\PP}{\ensuremath{\mathbb{P}}}
\newcommand{\one}{\ensuremath{\mathbf{1}}}

\newcommand{\cE}{\ensuremath{\mathcal{E}}}

\newcommand{\cD}{\ensuremath{\mathcal{D}}}
\newcommand{\CM}{\ensuremath{\operatorname{CM}}}
\newcommand{\spa}{\ensuremath{\mathrm{span}}}
\newcommand{\diag}{\ensuremath{\mathrm{diag}}}
\newcommand{\Det}{\ensuremath{\mathrm{Det}}}
\newcommand{\tr}{\ensuremath{\operatorname{tr}}}
\newcommand{\supp}{\ensuremath{\operatorname{supp}}}
\newcommand{\EMTPtwo}{\ensuremath{\mathrm{EMTP}_2}}
\newcommand{\MTPtwo}{\ensuremath{\mathrm{MTP}_2}}

\newcommand{\PSD}{\ensuremath{\succeq}}
\newcommand{\NSD}{\ensuremath{\preceq}}

\crefname{theorem}{theorem}{theorems}
\Crefname{theorem}{Theorem}{Theorems}
\crefname{lemma}{lemma}{lemmas}
\Crefname{lemma}{Lemma}{Lemmas}
\crefname{proposition}{proposition}{propositions}
\Crefname{proposition}{Proposition}{Propositions}
\crefname{definition}{definition}{definitions}
\Crefname{definition}{Definition}{Definitions}
\crefname{remark}{remark}{remarks}
\Crefname{remark}{Remark}{Remarks}
\crefname{corollary}{corollary}{corollaries}
\Crefname{corollary}{Corollary}{Corollaries}
\crefname{example}{example}{examples}
\Crefname{example}{Example}{Examples}
\crefname{algocf}{algorithm}{algorithms}
\Crefname{algocf}{Algorithm}{Algorithms}

\title[Spectral Sparsification of LCGGM and H\"{u}sler--Reiss Graphical Models]{Spectral
       Sparsification of Laplacian-Constrained Gaussian and
       H\"{u}sler--Reiss Graphical Models}

\author{Ignacio {Echave-Sustaeta Rodríguez}}
\address{Department of Mathematics and Computer Science, Eindhoven University of Technology, Eindhoven, The Netherlands}
\email{i.echave.sustaeta.rodriguez@tue.nl}

\author{Aida Abiad}
\address{Department of Mathematics and Computer Science, Eindhoven University of Technology, Eindhoven, The Netherlands}
\email{a.abiad@tue.nl}

\author{Frank Röttger}
\address{Department of Applied Mathematics, University of Twente, Enschede, The Netherlands}
\email{f.rottger@utwente.nl}

\begin{document}

\begin{abstract}
Graph Laplacians encode graph structures in matrix form, and thus facilitate the application of linear algebra to graph theory. 
In statistics, two related families of probabilistic graphical models can be parameterized by graph Laplacians. The first one is the Laplacian-constrained Gaussian graphical model (LCGGM), which
imposes that the (pseudo-)inverse covariance matrix of a Gaussian random vector is a Laplacian matrix. Applications include graph signal processing and network topology learning.  
The second one is the H\"{u}sler--Reiss graphical model, which is considered as an extremal analog of the Gaussian graphical model, and can be used in extremal dependence modeling of floods, heatwaves,
and financial losses.  
For both models, the restriction to positive edge weights in the graph Laplacian gives rise to an approach for graph structure learning that does not require tuning parameters.
While these approaches yield a strong model fit in many settings, the resulting graph estimates are typically much denser than the underlying ground truth, limiting interpretability and scalability.
In order to improve the accuracy of Laplacian-constrained graph learning, we propose to use
\emph{spectral graph sparsification} as a post-estimation operation.  To do so, we replace the original Laplacian estimate by a sparser Laplacian that is spectrally close, and re-fit the model on the resulting graph.  
We refer to the two resulting methods as
\emph{Spectral-LCGGM} and \emph{Spectral-HR}.  
We investigate the properties of the proposed estimators and show several theoretical results on their performance.
Furthermore, we demonstrate that the newly proposed methods perform well by running simulations on
Erd\H{o}s--R\'{e}nyi and stochastic block model graphs, and we also showcase their applications
to real data.
\end{abstract}

\maketitle

\section{Introduction}\label{sec:intro}

Probabilistic graphical models represent multivariate dependence through graphs, with
edges connecting pairs of variables that interact directly after
accounting for the others \citep{lauritzen1996}.  Two recent families
of graphical models share the algebraic feature that their precision
matrix is a connected weighted graph Laplacian.
The first is the
\emph{Laplacian-constrained Gaussian graphical model} (LCGGM),
that is a degenerate Gaussian supported on the hyperplane $\one^\perp$, in which the Laplacian
structure is imposed as a modeling constraint \citep{Egilmez2017,Ying2020,WangZhaoFeng2022,CaiCardosoPalomarYing2023}.  LCGGMs are widely
used in graph signal processing and network topology learning, and are
a degenerate-rank counterpart of Gaussian graphical models under multivariate total positivity of order 2 (\MTPtwo{})
\citep{SlawskiHein2015,LauritzenUhlerZwiernik2019,WangRoyUhler2020}.  The second is the
\emph{H\"{u}sler--Reiss graphical model} (HRGM) under \emph{extremal \MTPtwo} (\EMTPtwo{}) \citep{engelkehitz,REZ2023}, which can be considered as an extremal analog of Gaussian graphical models (under \MTPtwo{}).  
Here, the Laplacian structure of the precision
characterizes the positivity assumption.

While both models perform well in settings where the graph Laplacian assumption is reasonable, standard estimators return overly dense graphs.
In fact, \citet{REZ2023} show that their estimator for H\"usler--Reiss graphical models under \EMTPtwo{} asymptotically identifies a super-graph of the underlying ground truth.
In addition to obscuring model identification, retaining spurious edges in the graph estimates can slow down downstream graph-level
computations.  Penalized alternatives, for example $\ell_1$- and
nonconvex-based formulations for LCGGMs \citep{Ying2020} or
the eglearn estimator for HRGMs
\citep{ELV2025}, can improve the sparsity of the graph estimates, but may incur a
measurable cost in fit quality.

In this paper, we propose to employ techniques from spectral graph theory in order to maintain the fit quality of the unpenalized estimators, while improving upon their sparsity.
Our approach is similar to the post-estimation spectral-sparsification approach
of \citet{SpectralMTP2026}, developed there for the full-rank Gaussian
\MTPtwo{} model.  Because the precision matrices for LCGGMs and HRGMs are already graph Laplacians, we can apply the deterministic linear-size sparsifier of
\citet{BatsonSpielmanSrivastava2012} (BSS).  
Our approach is the same in both settings: We learn a
dense graph estimate, sparsify the resulting precision matrix, and refit the models on the
sparsified support.  We call the resulting estimators
\emph{Spectral-LCGGM} and \emph{Spectral-HR}, respectively.
As a theoretical contribution, we bound the change in
log-likelihood induced by the sparsifier in both settings, and
establish model-class preservation, together with additional guarantees on variograms and the H\"usler--Reiss exponent measure
density (\Cref{sec:hr}).  The analysis rests on a single identity on
$\one^\perp$ that combines a Bregman representation of the
log-likelihood gap with the spectral approximation produced by the
sparsifier, a degenerate-rank analog of the identity used in
\citet{SpectralMTP2026}.
LCGGM simulations on
Erd\H{o}s--R\'{e}nyi and stochastic block-model ground truths, and
applications of Spectral-HR to extremal Danube river levels and extremal flight delays in the Southern United States 
show that the sparsifiers retain the log-likelihood of their denser baselines at a significantly reduced
edge count, with substantially better graph recovery.

\subsection*{Related work.}
Spectral sparsification was introduced by \citet{ST2004} and was implemented with various methods in \citet{SpSr2011,BatsonSpielmanSrivastava2012,KLP12},
being extended to sparsification of strictly diagonally dominant
$M$-matrices via approximate Gaussian elimination
\citep{KyngSachdeva2016}.  In statistics and machine learning, these
methods have been used for multiple applications, with one of the earliest being
\citet{Srivastava2011CovarianceEF}. One of the closest ideas
to ours is \citet{SadhanalaWangTibshirani2016}, in which these methods are 
used to sparsify graphs prior to Laplacian-regularized regression while preserving
statistical validity. Another application combines
ridge spectral sparsification with Laplacian learning for scalable
semi-supervised classification \citep{CalandrielloLazaricKoutisValko2018}.
The most direct precedent for the present paper is
\citet{SpectralMTP2026}, which introduces the post-estimation BSS
sparsifier and the Bregman--Loewner analysis for the full-rank
Gaussian \MTPtwo{} model.  The LCGGM and \EMTPtwo{} H\"{u}sler--Reiss
settings in the present paper share the same template but require degenerate-rank
versions of the bounds: Kullback--Leibler (KL) divergence on $\one^\perp$, pseudo-determinants, and
trace-norm residuals through the projector $P$.  A related LCGGM
precedent is \citet{WangZhaoFeng2022}, who learn a Laplacian-constrained graph
by greedy spectral densification: starting from a sparse base graph, they apply
spectral-sparsification-style criteria to incrementally add the most spectrally
critical edges.  We instead apply sparsification as a post-estimation step
with explicit Loewner-order, log-likelihood, and KL guarantees.  On the modeling side of LCGGMs,
common estimators include the unpenalized CGL of \citet{Egilmez2017},
the GLE-ADMM algorithm of \citet{KumarYingCardosoPalomar2020}, and
the nonconvex NGL-SCAD estimator of \citet{Ying2020}.  For HRGMs,
the canonical dense baseline is the
\EMTPtwo{}-constrained surrogate maximum likelihood estimator (MLE) \citep{REZ2023} and the canonical
sparse baseline is the eglearn estimator
\citep{ELV2025}.

\section{Preliminaries}\label{sec:preliminaries}

\subsection{Graphs, Laplacians, and Spectral Sparsification}\label{sec:graph_prelims}

We consider connected weighted graphs $G=(V,E,c)$ with vertex set $V$, edge set
$E$, and symmetric weight function $c:V\times V\to[0,\infty)$.
The weighted Laplacian is $L_G := \mathrm{Deg}(G)-A$, where
$A=(c_{ij})$ is the adjacency matrix and $\mathrm{Deg}(G)=\diag(k_i)$ the diagonal degree
matrix with $k_i=\sum_j c_{ij}$. For a connected weighted graph, $L_G$ is
positive semidefinite, satisfies $L_G\one=0$, and has rank $|V|-1$.

For $\varepsilon\in(0,1)$, a graph $G'$ on the same vertex set is an
\emph{$\varepsilon$-spectral approximation} of $G$ if
\begin{equation}\label{eq:spectral_approx_graph}
(1-\varepsilon)\,z^\top L_G\,z \;\le\; z^\top L_{G'} z \;\le\; (1+\varepsilon)\,z^\top L_G\,z
\qquad\forall\, z\in\RR^{|V|}.
\end{equation}

For symmetric matrices $A,B\in\RR^{|V|\times|V|}$ and a subspace
$W\subseteq\RR^{|V|}$, we write $A\NSD B$ \emph{on $W$} if
$z^\top A z\le z^\top B z$ for every $z\in W$.  When the suffix
``on $W$'' is omitted, we mean $W=\RR^{|V|}$.  $A\PSD B$ is defined
symmetrically.
This is the \emph{Loewner order}, a partial order on symmetric
matrices.  For Laplacians the natural choice is $W=\one^\perp$,
since their kernel $\spa(\one)$ contributes no information.
In this notation, \eqref{eq:spectral_approx_graph} reads
$(1-\varepsilon)L_G\NSD L_{G'}\NSD(1+\varepsilon)L_G$.

The following deterministic construction, due to
\citet{BatsonSpielmanSrivastava2012}, achieves such an approximation
with a linear-size number of edges. We use it throughout the paper.

\begin{theorem}\label{thm:ss}
Let $L$ be a connected weighted Laplacian on $d$ vertices and fix $\eta>1$.
There is a deterministic polynomial-time algorithm that outputs a reweighted
subgraph Laplacian $\widetilde L$ supported on at most $\lceil\eta(d-1)\rceil$
edges that satisfies
\[
L\NSD\widetilde L\NSD\kappa(\eta)\,L\qquad\text{on }\one^\perp,
\qquad\kappa(\eta) := \Bigl(\tfrac{\sqrt\eta+1}{\sqrt\eta-1}\Bigr)^2.
\]
After rescaling $\widetilde L$ by the constant $2/(1+\kappa(\eta))$, this
becomes the symmetric $(1\pm\varepsilon)$ Loewner bound
$(1-\varepsilon)L\NSD\widetilde L\NSD(1+\varepsilon)L$ on $\one^\perp$ with
$\varepsilon=(\kappa(\eta)-1)/(\kappa(\eta)+1)=2\sqrt\eta/(\eta+1)$.
\end{theorem}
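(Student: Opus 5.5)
The plan is to reduce the statement to the Batson--Spielman--Srivastava barrier-function theorem for sums of rank-one matrices, and then do a short computation for the rescaling. Write $L=\sum_{e\in E} c_e\, b_e b_e^\top$ with $b_e=\mathbf e_i-\mathbf e_j$ for $e=\{i,j\}$. Since $G$ is connected, $L$ has kernel exactly $\spa(\one)$, so the pseudoinverse square root $L^{+/2}$ is well defined on $\one^\perp$. Set $v_e := \sqrt{c_e}\,L^{+/2}b_e$. Then $\sum_e v_ev_e^\top = P$, where $P$ is the orthogonal projector onto $\one^\perp$, which is the identity on a space of dimension $n=d-1$.

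The main step is to apply the BSS theorem in this $n$-dimensional space. It says: given vectors with $\sum_e v_ev_e^\top=I_n$ and $\eta>1$, there exist nonnegative scalars $s_e$, at most $\lceil \eta n\rceil$ of them nonzero, such that
\[
I_n\NSD \sum_e s_e v_ev_e^\top \NSD \kappa(\eta)\,I_n .
\]
I would reproduce the deterministic barrier argument. One maintains $A=\sum s_ev_ev_e^\top$ and barriers $\ell<\lambda_{\min}(A)\le\lambda_{\max}(A)<u$, with potentials $\Phi^u(A)=\tr(uI-A)^{-1}$ and $\Phi_\ell(A)=\tr(A-\ell I)^{-1}$. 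Each step shifts $u\mapsto u+\delta_U$ and $\ell\mapsto \ell+\delta_L$. The key lemma shows that some $e$ and some $t>0$ make $A+t v_ev_e^\top$ keep both potentials non-increasing. This follows from averaging, using $\sum_e v_e v_e^\top=I$ and the Sherman--Morrison formula, which gives $\sum_e U_A(v_e)\le 1/\delta_U+\epsilon_U\le 1/\delta_L-\epsilon_L\le \sum_e L_A(v_e)$. With the standard parameters $\delta_L=1$, $\delta_U=(\sqrt\eta+1)/(\sqrt\eta-1)$, $\epsilon_L=1/\sqrt\eta$, $\epsilon_U=(\sqrt\eta-1)/\sqrt\eta$, $\ell_0=-n/\epsilon_L$ and $u_0=n/\epsilon_U$, the argument runs for $\eta n$ steps. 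It ends with $\lambda_{\max}/\lambda_{\min}\le(u_0+\eta n\delta_U)/(\ell_0+\eta n\delta_L)$, and this ratio simplifies to $\kappa(\eta)$. Rescaling so that $\lambda_{\min}\ge1$ gives the displayed bound. Every quantity involved (eigendecompositions, traces, the search over $e$ and $t$) is computable in polynomial time, so the algorithm is deterministic and polynomial. I expect this lemma, together with the bookkeeping that the parameters yield exactly $\kappa(\eta)$, to be the main obstacle. In a paper of this kind I would cite it rather than reprove it.

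To pull the result back, define $\widetilde L := \sum_e s_e c_e\, b_eb_e^\top$. This is a reweighted subgraph Laplacian on at most $\lceil\eta(d-1)\rceil$ edges, and $L^{+/2}\widetilde L L^{+/2}=\sum_e s_e v_ev_e^\top$. Take any $z\in\one^\perp$ and write $z=L^{1/2}y$ with $y\in\one^\perp$; this is possible because $L^{1/2}$ is a bijection on $\one^\perp$. Then
\[
z^\top \widetilde L z \;=\; y^\top L^{1/2}\widetilde L L^{1/2}y .
\]
Conjugating the inequality $I\NSD\sum_e s_ev_ev_e^\top\NSD\kappa I$ on $\one^\perp$ then gives $L\NSD\widetilde L\NSD\kappa(\eta)L$ on $\one^\perp$. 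Both sides vanish on $\one$, so the inequality also holds on all of $\RR^d$.

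Finally, multiply the bound by $\alpha=2/(1+\kappa)$. This gives $\alpha L\NSD\alpha\widetilde L\NSD\alpha\kappa L$, and
\[
\alpha=1-\varepsilon,\qquad \alpha\kappa = \frac{2\kappa}{1+\kappa}=1+\varepsilon,\qquad \varepsilon=\frac{\kappa-1}{\kappa+1}.
\]
Put $r=(\sqrt\eta+1)/(\sqrt\eta-1)$, so that $\kappa=r^2$. Then
\[
\varepsilon=\frac{r^2-1}{r^2+1}=\frac{4\sqrt\eta}{2(\eta+1)}=\frac{2\sqrt\eta}{\eta+1},
\]
which is the stated value.
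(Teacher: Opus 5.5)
Your route is the paper's route. The paper gives no proof of this theorem beyond attributing it to \citet{BatsonSpielmanSrivastava2012}. Your reduction via $v_e=\sqrt{c_e}\,L^{+/2}b_e$, the pull-back to a reweighted subgraph, and the rescaling arithmetic ($\alpha=1-\varepsilon$, $\alpha\kappa=1+\varepsilon$, $\varepsilon=2\sqrt\eta/(\eta+1)$) are all correct. If you cite the BSS rank-one theorem, as you propose, the argument is complete.

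The barrier argument you sketch, however, fails with the parameters you list. With $\delta_L=1$, $\epsilon_L=1/\sqrt\eta$, $\delta_U=(\sqrt\eta+1)/(\sqrt\eta-1)$ and your $\epsilon_U=(\sqrt\eta-1)/\sqrt\eta$,
\[
\frac{1}{\delta_U}+\epsilon_U=\frac{\sqrt\eta-1}{\sqrt\eta+1}+\frac{\sqrt\eta-1}{\sqrt\eta}\;>\;\frac{\sqrt\eta-1}{\sqrt\eta}=\frac{1}{\delta_L}-\epsilon_L .
\]
So your chain $\sum_e U_A(v_e)\le 1/\delta_U+\epsilon_U\le 1/\delta_L-\epsilon_L\le\sum_e L_A(v_e)$ breaks in the middle. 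Averaging then no longer yields an $e$ with $L_A(v_e)\ge U_A(v_e)$.

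Your claimed final ratio is also not $\kappa(\eta)$. With $u_0=n\sqrt\eta/(\sqrt\eta-1)$ you get $(u_0+\eta n\delta_U)/(\ell_0+\eta n)=(\eta+\sqrt\eta+1)/(\sqrt\eta-1)^2$. That is strictly better than the BSS bound, which should have been a warning sign.

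The correct choice is $\epsilon_U=(\sqrt\eta-1)/(\eta+\sqrt\eta)$. Then $1/\delta_U+\epsilon_U=\frac{\sqrt\eta-1}{\sqrt\eta+1}\bigl(1+\frac{1}{\sqrt\eta}\bigr)=1-\frac{1}{\sqrt\eta}$, so the condition holds with equality. Moreover $u_0+\eta n\delta_U=n\sqrt\eta(\sqrt\eta+1)^2/(\sqrt\eta-1)$ and $\ell_0+\eta n=n\sqrt\eta(\sqrt\eta-1)$, whose ratio is exactly $\kappa(\eta)$.

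Two smaller points.
\begin{itemize}
\item In the pull-back, your substitution $z=L^{1/2}y$ goes the wrong way: $y^\top L^{1/2}\widetilde L L^{1/2}y$ is not the conjugated sum. Use $w=L^{1/2}z$ instead. Since $L^{1/2}\bigl(\sum_e s_ev_ev_e^\top\bigr)L^{1/2}=P\widetilde LP=\widetilde L$, you get $z^\top\widetilde Lz=w^\top\bigl(\sum_e s_ev_ev_e^\top\bigr)w$ and $z^\top Lz=\|w\|^2$ for every $z$.
\item When $\eta n$ is not an integer, run $Q=\lceil\eta n\rceil$ steps. This is harmless because $(u_0+Q\delta_U)/(\ell_0+Q\delta_L)$ is decreasing in $Q$, as $\delta_U\ell_0<\delta_L u_0$.
\end{itemize}
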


\subsection{Laplacian-Constrained Gaussian Graphical Models}\label{sec:lcggm_prelims}
A \emph{Laplacian-constrained Gaussian graphical model} (LCGGM) is a
centered, degenerate Gaussian random vector $X$ with support $\one^\perp\subset\RR^d$,
whose precision matrix $K$ is a connected weighted graph Laplacian
\citep{Egilmez2017,Ying2020,WangZhaoFeng2022}.  The model has a probability density with respect to the Lebesgue measure on $\one^\perp$:
\begin{equation}\label{eq:lcggm_density}
f_K(x)=(2\pi)^{-(d-1)/2}\,\Det(K)^{1/2}
       \exp\!\bigl(-\tfrac12\,x^\top K x\bigr),
\qquad x\in\one^\perp.
\end{equation}
Here, $\Det$ denotes the pseudo-determinant, that is the product of the nonzero eigenvalues.  For a sample covariance matrix $S\PSD 0$ obtained from $n$ i.i.d.\ copies of $X$, the LCGGM
log-likelihood up to additive constants is equal to
\begin{equation}\label{eq:lcggm_loglik}
\ell(K;S):=\log\Det(K)-\tr(KS).
\end{equation}
This gives rise to the LCGGM-MLE given $S$
\begin{equation}\label{eq:LCGGM-MLE}
    \widehat{K}=\arg\max_K \ell(K;S)\quad \text{s.t. } K_{ij} \le 0, \;\; \forall i\neq j.
\end{equation}
As $c$ is restricted to $[0,\infty)$, LCGGM is the Laplacian analog of the Gaussian \MTPtwo{} distributions
\citep{LauritzenUhlerZwiernik2019,WangRoyUhler2020}: both impose
nonpositive off-diagonal entries on the precision matrix. However, LCGGM
additionally requires zero row sums, so $K$ is the Laplacian of a positively
weighted graph. 

\subsection{H\"{u}sler--Reiss Graphical Models and graph Laplacians}\label{sec:hr_prelims}

Let $X=(X_1,\ldots,X_d)$ be a random vector with continuous marginal cumulative distribution functions $F_i$.
After transforming to exponential margins $X_i^*:=-\log(1-F_i(X_i))$,
the threshold-exceedance limit
\begin{equation}\label{eq:MPD}
\PP(Y\le y)
= \lim_{u\to\infty}
\PP\bigl(X^*-u\one\le y\mid\max_{i\in[d]}X_i^*\ge u\bigr)
\end{equation}
yields, under multivariate regular variation, a multivariate (generalized) Pareto vector $Y$.
Assuming asymptotic dependence, $Y$ is supported on
$\cE:=\{y\in\RR^d:\max_i y_i>0\}$
\citep{roo2006,HES2022}.

A $d$-variate H\"{u}sler--Reiss distribution is a multivariate Pareto distribution that can be parametrized by a conditionally negative definite variogram
matrix
\[
\Gamma\in\cD^d :=
\bigl\{\Gamma\in\RR^{d\times d}:\Gamma=\Gamma^\top,\,\diag(\Gamma)=0,\,
x^\top\Gamma x<0\;\forall\,x\neq 0,\,x\perp\one\bigr\}.
\]
Its probability density is proportional to a function $\lambda$, typically referred to as exponent measure density:
\begin{equation}\label{eq:hr_density_gamma}
f_\Gamma(y)\propto \lambda_\Gamma(y)=\frac{1}{\sqrt{-\det\CM(2\pi\Gamma)}}
\exp\!\Bigl(-\tfrac12
\begin{pmatrix}y&1\end{pmatrix}
\CM(\Gamma)^{-1}
\begin{pmatrix}y\\1\end{pmatrix}\Bigr).
\end{equation}
Here, $\CM(\Gamma):=\bigl(\begin{smallmatrix}-\Gamma/2&\one\\\one^\top&0\end{smallmatrix}\bigr)$
is the Cayley--Menger matrix \citep{devriendt2022a,devriendt2026extremal}.
Let
$P:=I-\frac{1}{d}\one\one^\top$ be a projection matrix onto $\one^\perp$, and define 
\begin{equation}\label{eq:Sigmahat_def}
\Sigma\;:=\;-\tfrac{1}{2}\,P\,\Gamma\,P.
\end{equation}
The variogram is recovered from $\Sigma$ via
$\Gamma=\diag(\Sigma)\one^\top+\one\diag(\Sigma)^\top-2\Sigma$.
If $\Gamma\in\cD^d$, then $\Sigma$ is positive semidefinite with kernel $\spa(\one)$ and has a Moore--Penrose pseudoinverse $\Theta=\Sigma^+$.
We write $\Gamma_\Theta$ for the variogram associated with a precision $\Theta$ via this identity (with $\Sigma=\Theta^+$).
The matrix $\Theta$ allows for a parametric encoding of HRGMs:
$\Theta_{ij}=0$ is equivalent to the extremal conditional
independence of $Y_i$ and $Y_j$ given the remaining coordinates
\citep{HES2022}.
Furthermore, for H\"usler--Reiss distributions, the restriction for $\Theta$ to be a graph Laplacian matrix ($\Theta_{ij}\le 0$ for all $i\neq j$) is equivalent to an extremal notion of total positivity (\EMTPtwo{}), compare \citet{REZ2023}.

Assuming i.i.d.~data for some random vector $X$ in the domain of attraction of a H\"usler--Reiss random vector $Y$ as in \eqref{eq:MPD},
the empirical variogram estimator $\overline\Gamma$ of \citet{EV2022} is a consistent estimator of $\Gamma$, which can serve as a summary statistic for parameter estimation in HRGMs \citep{HES2022}.
In order to estimate $\Theta$ for HRGMs under the constraint of \EMTPtwo{}, \citet{REZ2023} propose a surrogate MLE under graph Laplacian constraint via the LCGGM log-likelihood with sample covariance $\overline{\Sigma}=-\tfrac{1}{2}\,P\,\overline{\Gamma}\,P $, that is
\begin{equation}\label{eq:HR-sMLE}
    \widehat{\Theta}=\arg\max_\Theta \ell(\Theta;\overline{\Sigma})\quad \text{s.t. } \Theta_{ij} \le 0, \;\; \forall i\neq j.
\end{equation}
Note that we can use this relationship between both estimation procedures to apply results and algorithms from LCGGM to H\"usler--Reiss.

\section{Laplacian-Constrained Gaussian Graphical Models}\label{sec:lcggm}

\subsection{Theory}\label{sec:lcggm_theory}

The LCGGM-MLE $\widehat K$ obtained from a sample covariance $S$ via
\eqref{eq:LCGGM-MLE} is by construction a connected graph Laplacian,
so \Cref{thm:ss} applies directly: there exists a sparsifier
$\widetilde K$ that is also a connected graph Laplacian and satisfies
the spectral approximation
\begin{equation}\label{eq:spec_lcggm}
(1-\varepsilon)\widehat{K}\NSD\widetilde K\NSD(1+\varepsilon)\widehat{K}
\qquad\text{on }\one^\perp.
\end{equation}
We use the notation
\begin{equation}\label{eq:kl_perp}
D_{\mathrm{KL}}^\perp(\widehat{K}\,\|\,\widetilde K)
\;:=\;\tfrac12\bigl[\tr(\widetilde K\,\widehat{K}^+)\;-\;(d-1)\;-\;\log\Det(\widetilde K\,\widehat{K}^+)\bigr]
\end{equation}
for the KL divergence between the two centered degenerate Gaussians
on $\one^\perp$ with precisions $\widehat{K}$ and $\widetilde K$. See \Cref{lem:kl_perp_identity} for the derivation.

Our goal is to control the LCGGM log-likelihood gap
$\ell(\widetilde K;T)-\ell(\widehat K;T)$ induced by replacing
$\widehat K$ by its sparsifier $\widetilde K$, for any (projected)
sample covariance matrix $T$ supported on $\one^\perp$
(i.e., $T\one=0$).
The theorem below decomposes this gap into a structural cost
(the KL term, which is the optimization-side price of moving away
from $\widehat K$) and a data residual (a trace term that vanishes
when $T = \widehat K^+$), and bounds it linearly in $\varepsilon$ in
both directions.  Two regimes are worth distinguishing.  If $T = S$
(in-sample evaluation), then $\widehat K$ is the constrained MLE on
$S$ over the cone of connected Laplacians, so by definition no
feasible $\widetilde K$ can improve the in-sample log-likelihood and
the gap is non-positive.  If instead $T \neq S$, for instance when
$T$ is the sample covariance of held-out test data, the data
residual
$\tr\!\bigl((\widetilde K-\widehat K)(T-\widehat K^+)\bigr)$ can take
either sign, and the sparsifier may improve the log-likelihood on
$T$ relative to $\widehat K$.

\begin{theorem}\label{thm:main_lcggm}
Let $\varepsilon\in(0,1)$, let $\widehat{K},\widetilde K$ be connected
graph Laplacians satisfying \eqref{eq:spec_lcggm}, and let $T\PSD 0$
be symmetric with $T\one=0$.  Then
\begin{equation}\label{eq:bregman_lcggm}
\ell(\widetilde K;T)-\ell(\widehat{K};T)
=-2\,D_{\mathrm{KL}}^\perp(\widehat{K}\,\|\,\widetilde K)
-\tr\!\bigl((\widetilde K-\widehat{K})(T-\widehat{K}^+)\bigr),
\end{equation}
and, setting
$R(T):=\bigl\|\widehat{K}^{1/2}(T-\widehat{K}^+)\widehat{K}^{1/2}\bigr\|_*$
(with $\widehat{K}^{1/2}$ the symmetric square root of $\widehat{K}$
on $\one^\perp$, zero on $\spa(\one)$, and $\|\cdot\|_*$ the trace
norm),
\begin{equation}\label{eq:loglik_lcggm_upper}
\ell(\widetilde K;T)-\ell(\widehat{K};T)\;\le\;\varepsilon\,R(T),
\end{equation}
\begin{equation}\label{eq:loglik_lcggm_bound}
-\frac{(d-1)\varepsilon^2}{2(1-\varepsilon)}-\varepsilon\,R(T)
\;\le\;\ell(\widetilde K;T)-\ell(\widehat{K};T)
\;\le\;\varepsilon\,R(T).
\end{equation}
\end{theorem}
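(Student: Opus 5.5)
The plan is to first establish the identity \eqref{eq:bregman_lcggm} by direct algebra, and then derive both inequalities from it by working in the whitened coordinates $M:=(\widehat K^{+})^{1/2}\widetilde K(\widehat K^{+})^{1/2}$, viewed as an operator on $\one^\perp$. For the identity, I expand $\ell(\widetilde K;T)-\ell(\widehat K;T)=\log\Det\widetilde K-\log\Det\widehat K-\tr((\widetilde K-\widehat K)T)$. I then add and subtract $\tr((\widetilde K-\widehat K)\widehat K^+)=\tr(\widetilde K\widehat K^+)-(d-1)$; here $\widehat K\widehat K^+=P$ has trace $d-1$ because $\widehat K$ is connected. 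Finally I use $\log\Det\widetilde K-\log\Det\widehat K=\log\Det(\widetilde K\widehat K^+)$. This multiplicativity holds because both matrices share the kernel $\spa(\one)$ and act as invertible maps on the invariant subspace $\one^\perp$, so pseudo-determinants are ordinary determinants of the restrictions. Comparing with \eqref{eq:kl_perp} gives \eqref{eq:bregman_lcggm}.

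For the bounds, let $\mu_1,\dots,\mu_{d-1}$ be the eigenvalues of $M$ on $\one^\perp$. Conjugating \eqref{eq:spec_lcggm} by $(\widehat K^+)^{1/2}$ shows that $1-\varepsilon\le\mu_i\le1+\varepsilon$. The spectrum of $\widetilde K\widehat K^+$ on $\one^\perp$ is the same as that of $M$, so
\[
2D_{\mathrm{KL}}^\perp(\widehat K\,\|\,\widetilde K)=\sum_i(\mu_i-1-\log\mu_i).
\]
Each summand is nonnegative. Using $\log(1+x)\ge x-\tfrac{x^2}{2(1+x)}$, or more simply $\mu-1-\log\mu\le(\mu-1)^2/\mu$ together with $(\mu-1)^2\le\varepsilon^2$ and $\mu\ge 1-\varepsilon$, I expect to get $2D^\perp_{\mathrm{KL}}\le(d-1)\varepsilon^2/(2(1-\varepsilon))$. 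I will check which elementary inequality gives exactly the factor $1/2$. The natural candidate is $\mu-1-\log\mu\le(\mu-1)^2/(2\min(\mu,1))$, obtained from Taylor's theorem with remainder $\tfrac{(\mu-1)^2}{2\xi^2}$ for some $\xi$ between $1$ and $\mu$.

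For the data residual, I write $\widetilde K-\widehat K=\widehat K^{1/2}(M-P)\widehat K^{1/2}$ with the symmetric square roots defined in the statement. This uses $\widehat K^{1/2}(\widehat K^+)^{1/2}=P$ and the fact that $\widetilde K$ annihilates $\one$. Cyclicity then gives
\[
\tr\bigl((\widetilde K-\widehat K)(T-\widehat K^+)\bigr)=\tr\bigl((M-P)\,\widehat K^{1/2}(T-\widehat K^+)\widehat K^{1/2}\bigr).
\]
The operator norm of $M-P$ is at most $\varepsilon$, since it vanishes on $\spa(\one)$ and has eigenvalues $\mu_i-1$ on $\one^\perp$. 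Hölder's inequality for the trace and operator norms then bounds the absolute value of this term by $\varepsilon R(T)$. Combining this with $D^\perp_{\mathrm{KL}}\ge0$ gives \eqref{eq:loglik_lcggm_upper}. Combining it with the KL upper bound gives the lower half of \eqref{eq:loglik_lcggm_bound}.

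The main obstacle is bookkeeping with the kernel. I need to make sure every pseudo-inverse and square-root manipulation stays on $\one^\perp$: that $T\one=0$ and $\widehat K^+\one=0$ make the conjugated residual live on $\one^\perp$, and that $\log\Det$ is genuinely multiplicative there. I will handle this by restricting everything to $\one^\perp$ with an orthonormal basis $U\in\RR^{d\times(d-1)}$. This reduces every statement to the full-rank $(d-1)\times(d-1)$ case, where the standard Bregman/log-det arguments apply verbatim. The second delicate point is getting the exact constant $\tfrac{1}{2(1-\varepsilon)}$ from the scalar inequality, which I will verify separately for $\mu\ge1$ and $\mu<1$.
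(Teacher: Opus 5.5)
Your proposal is correct and follows the paper's proof essentially step for step. It uses the same add-and-subtract derivation of \eqref{eq:bregman_lcggm}, the same whitened operator-norm/trace-norm H\"older bound (the paper's Lemma~\ref{lem:optr}), and the same eigenvalue-wise KL bound (Lemma~\ref{lem:kl_loewner}).

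One caution on the scalar step. Your candidate inequality $\mu-1-\log\mu\le(\mu-1)^2/(2\min(\mu,1))$ is true, but the two justifications you suggest do not establish it as stated:
\begin{itemize}
\item The Lagrange remainder $(\mu-1)^2/(2\xi^2)$ only gives $\varepsilon^2/(2(1-\varepsilon)^2)$ when $\mu<1$.
\item The inequality $\log(1+x)\ge x-x^2/(2(1+x))$ holds only for $x\in(-1,0]$ and fails for every $x>0$.
\end{itemize}
To get exactly $\varepsilon^2/(2(1-\varepsilon))$, use that logarithm inequality for $\mu<1$ and the remainder bound for $\mu\ge 1$. Alternatively, use the paper's bound $t-\log(1+t)\le t^2/(2(1-|t|))$, which follows from $|t-\log(1+t)|\le\sum_{k\ge2}|t|^k/k$.
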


Beyond the log-likelihood bound, any sparsifier $\widetilde K$
satisfying \eqref{eq:spec_lcggm} is automatically a valid LCGGM
precision matrix, with algebraic connectivity (the smallest nonzero
eigenvalue $\lambda_2$, also called the Fiedler eigenvalue) preserved
within $1\pm\varepsilon$.

\Cref{thm:main_lcggm} guarantees that any spectral sparsifier of
$\widehat K$ loses only an $O(\varepsilon)$ amount of in-sample
log-likelihood.  This motivates the three-step pipeline of
\Cref{alg:spectral_mle}: (1) compute the dense LCGGM-MLE $\widehat K$
\eqref{eq:LCGGM-MLE} on the sample covariance $S$, (2) apply the BSS
sparsifier of \Cref{thm:ss} to $\widehat K$, and (3) refit the
LCGGM-MLE restricted to the edges retained by BSS.

The same algorithm applies to the H\"{u}sler--Reiss setting of
\Cref{sec:hr}: the LCGGM-MLE \eqref{eq:LCGGM-MLE} and the
\EMTPtwo{} surrogate HR-MLE \eqref{eq:HR-sMLE} solve the same convex
program $\arg\max\{\ell(K;D):K\in\mathcal{C}\}$ over the cone
$\mathcal{C}$ of connected graph Laplacians, with only the input $D$
changing ($S$ for LCGGM, the data summary $\overline\Sigma$ for HR).
\Cref{alg:spectral_mle} is stated with $D$ as input.

\begin{algorithm2e}[ht]
\DontPrintSemicolon
\caption{Spectral-LCGGM and Spectral-HR (unified).}
\label{alg:spectral_mle}
\KwIn{Data summary $D\in\RR^{d\times d}$ ($S$ for LCGGM,
$\overline\Sigma$ for HR); BSS parameter $\eta>1$.}
\KwOut{Sparse precision Laplacian $\widetilde L^{\mathrm{refit}}$.}
\textbf{Step 1: Dense MLE:}
$L\gets\arg\max\bigl\{\ell(K;D):K\in\mathcal{C}\bigr\}$, with
$\mathcal{C}$ the cone of connected graph Laplacians\;
\textbf{Step 2: BSS sparsification:} run \Cref{thm:ss} on $L$ at
$\eta$, obtaining $\widetilde L$ on $\le\lceil\eta(d-1)\rceil$ edges
satisfying $(1-\varepsilon)L\NSD\widetilde L\NSD(1+\varepsilon)L$ on
$\one^\perp$, with $\varepsilon=(\kappa(\eta)-1)/(\kappa(\eta)+1)$;
set $\widetilde E\gets\{(i,j):\widetilde L_{ij}\neq 0,\,i\neq j\}$\;
\textbf{Step 3: Refit:} \Return $\widetilde L^{\mathrm{refit}}\gets
\arg\max\bigl\{\ell(K;D):K\in\mathcal{C},\,\supp(K)\subseteq\widetilde E\bigr\}$\;
\end{algorithm2e}

Tuning proceeds through $\eta$, selected by BIC on the training data;
smaller $\eta$ yields sparser subgraphs at looser spectral certificates.

\subsection{Experiments}\label{sec:lcggm_experiments}

We evaluate Spectral-LCGGM on simulated data drawn from an LCGGM whose
ground-truth precision $K^\star$ is the Laplacian of a weighted random
graph (the specific graph families and dimensions are detailed below),
so $E^\star$ denotes the edge set of that random graph.  For each
method, we report the edge count $|\widehat E|$ (number of
off-diagonals of the estimate exceeding $10^{-3}$ in magnitude), the
precision $|\widehat E\cap E^\star|/|\widehat E|$, the recall
$|\widehat E\cap E^\star|/|E^\star|$, and the F1 score (harmonic mean
of precision and recall), together with the LCGGM test log-likelihood
$\ell(\widehat K;S_{\mathrm{test}})$ on a held-out sample covariance.
Tuning parameters are selected on the training split by the Bayesian information criterion (BIC) 
\begin{equation}\label{eq:bic}
\mathrm{BIC}(M)\;:=\;-n_{\mathrm{train}}\,\ell(M;D_{\mathrm{train}})\;+\;k(M)\,\log n_{\mathrm{train}},
\end{equation}
with $\ell$ the LCGGM log-likelihood \eqref{eq:lcggm_loglik},
$n_{\mathrm{train}}$ the number of training data points, $D_{\mathrm{train}}$ the training covariance in each setting, and $k(M)$ the edge count of the graph given by the Laplacian $M$.

We compare Spectral-LCGGM (\Cref{alg:spectral_mle}) against two
baselines that span the sparsity--fit trade-off: CGL
\citep{Egilmez2017}, the unpenalized LCGGM-MLE 
and NGL-SCAD \citep{Ying2020}, a nonconvex SCAD-penalized LCGGM
solver.  The Graphical Lasso is excluded because $\ell_1$-norm
penalization is known not to produce sparse estimates in the LCGGM
setting \citep{Ying2020}.

The ground-truth precision matrix $K^\star$ is the Laplacian of a
weighted random graph, with edge weights drawn from $\mathrm{Unif}[0.5,
1.5]$ to avoid eigenvalue multiplicities.  We consider two random graph
families: Erd\H{o}s--R\'{e}nyi (ER) graphs with uniform
degree, and stochastic block models (SBM) with five blocks
($p_{\mathrm{in}}=0.12$, $p_{\mathrm{out}}=0.01$) inducing community
structure.  Each topology is realized at two dimensions,
$d\in\{100,200\}$, with $n_{\mathrm{train}}=4d$ training and
$n_{\mathrm{test}}=4{,}000$ test observations, over $B=10$
replications.

\Cref{tab:sim_lcggm} reports the BIC-selected results.  Across all
four settings the unpenalized LCGGM-MLE (CGL) recovers a graph much
denser than the truth, with high recall and very low precision.
NGL-SCAD (BIC-selected $\alpha$) brings the edge count closer to
the truth with substantially improved F1.  Spectral-LCGGM
consistently delivers edge counts within $5\%$ of the truth and
achieves the highest F1 ($\ge 0.94$) in every setting, while also
obtaining the best test log-likelihood.  The test log-likelihood
gap to NGL-SCAD is narrow, reflecting that NGL-SCAD also fits the
data well; Spectral-LCGGM's advantage materialises primarily on
graph recovery.

\begin{table}[ht]
\centering
\resizebox{\linewidth}{!}{%
\begin{tabular}{llrrrrr}
\toprule
Setting & Method & Edges & Precision & Recall & F1 & Test log-lik.\ \\
\midrule
\multirow{3}{*}{\shortstack[l]{ER, $d=100$\\$|E^\star|=176$}}
  & CGL                     & $\phantom{1{,}}506.3\,(4.9)$ & $0.348\,(0.003)$ & $\mathbf{1.000}\,(0.000)$ & $0.516\,(0.004)$ & $-8{,}173\,(125)$ \\
  & NGL-SCAD                & $\phantom{1{,}}240.7\,(2.8)$ & $0.731\,(0.008)$ & $0.999\,(0.001)$ & $0.844\,(0.005)$ & $-7{,}269\,(123)$ \\
  & Spectral-LCGGM   & $\mathbf{185.0}\,(1.0)$ & $\mathbf{0.950}\,(0.006)$ & $0.998\,(0.001)$ & $\mathbf{0.973}\,(0.003)$ & $\mathbf{-6{,}631}\,(135)$ \\
\midrule
\multirow{3}{*}{\shortstack[l]{SBM, $d=100$\\$|E^\star|=171$}}
  & CGL                     & $\phantom{1{,}}453.9\,(4.9)$ & $0.377\,(0.004)$ & $\mathbf{1.000}\,(0.000)$ & $0.548\,(0.004)$ & $-27{,}284\,(141)$ \\
  & NGL-SCAD                & $\phantom{1{,}}233.7\,(2.0)$ & $0.731\,(0.007)$ & $0.998\,(0.001)$ & $0.844\,(0.005)$ & $-26{,}378\,(146)$ \\
  & Spectral-LCGGM   & $\mathbf{176.6}\,(0.7)$ & $\mathbf{0.963}\,(0.004)$ & $0.995\,(0.001)$ & $\mathbf{0.979}\,(0.002)$ & $\mathbf{-25{,}952}\,(141)$ \\
\midrule
\multirow{3}{*}{\shortstack[l]{ER, $d=200$\\$|E^\star|=391$}}
  & CGL                     & $1{,}259.5\,(8.4)$ & $0.311\,(0.002)$ & $\mathbf{1.000}\,(0.000)$ & $0.474\,(0.002)$ & $19{,}469\,(253)$ \\
  & NGL-SCAD                & $\phantom{1{,}}488.6\,(9.1)$ & $0.801\,(0.014)$ & $0.998\,(0.000)$ & $0.888\,(0.009)$ & $21{,}085\,(269)$ \\
  & Spectral-LCGGM   & $\mathbf{392.9}\,(0.5)$ & $\mathbf{0.992}\,(0.001)$ & $0.997\,(0.001)$ & $\mathbf{0.994}\,(0.001)$ & $\mathbf{21{,}889}\,(229)$ \\
\midrule
\multirow{3}{*}{\shortstack[l]{SBM, $d=200$\\$|E^\star|=623$}}
  & CGL                     & $1{,}886.9\,(16.8)$ & $0.329\,(0.003)$ & $\mathbf{0.995}\,(0.001)$ & $0.494\,(0.003)$ & $248{,}258\,(242)$ \\
  & NGL-SCAD                & $\phantom{1{,}}781.4\,(1.8)$ & $0.770\,(0.002)$ & $0.966\,(0.002)$ & $0.857\,(0.002)$ & $249{,}607\,(234)$ \\
  & Spectral-LCGGM   & $\mathbf{632.8}\,(1.9)$ & $\mathbf{0.936}\,(0.003)$ & $0.950\,(0.002)$ & $\mathbf{0.943}\,(0.002)$ & $\mathbf{250{,}223}\,(246)$ \\
\bottomrule
\end{tabular}}
\caption{LCGGM simulation across two topologies and two dimensions
($d\in\{100,200\}$). The two topologies are Erd\H{o}s--R\'{e}nyi (ER)
with target density $2(d-1)$ edges, and stochastic block model (SBM)
with five equal-sized blocks, $p_{\mathrm{in}}=0.12$,
$p_{\mathrm{out}}=0.01$ (so the true edge count $|E^\star|$ grows with
$d$).  Means over $B=10$ replications with standard errors in
parentheses.  Bold: best on each metric within each $(d,
\text{topology})$ block.}
\label{tab:sim_lcggm}
\end{table}

\section{H\"{u}sler--Reiss Graphical Models}\label{sec:hr}

\subsection{Theory}\label{sec:hr_theory}

Under \EMTPtwo{}, the precision matrix $\Theta$ is a
graph Laplacian (\Cref{sec:hr_prelims}), so \Cref{thm:ss} applies
directly to $\Theta$.  Moreover, by \eqref{eq:HR-sMLE} the surrogate
H\"{u}sler--Reiss MLE $\widehat\Theta$ is the LCGGM-MLE
\eqref{eq:LCGGM-MLE} evaluated at $\overline\Sigma$ in
place of the sample covariance.  No separate analysis is therefore
required for the H\"{u}sler--Reiss case: the LCGGM theory of
\Cref{sec:lcggm_theory} transfers verbatim under the substitution
$\widehat{K}\to\widehat \Theta$, $\widetilde K\to\widetilde\Theta$, $S\to\overline\Sigma$,
$\widehat K^+\to \widehat \Sigma=\widehat\Theta^+$, with the spectral approximation
\begin{equation}\label{eq:spec_hr}
(1-\varepsilon)\widehat \Theta\NSD\widetilde\Theta\NSD(1+\varepsilon)\widehat \Theta
\qquad\text{on }\one^\perp
\end{equation}
in place of \eqref{eq:spec_lcggm}.  The unified estimator of
\Cref{alg:spectral_mle}, run with $\widehat\Theta$ and $D=\overline\Sigma$,
produces Spectral-HR.
Concretely, \Cref{thm:main_lcggm} gives
\begin{equation}\label{eq:bregman_hr}
\ell(\widetilde\Theta;T)-\ell(\widehat{\Theta};T)
=-2\,D_{\mathrm{KL}}^\perp(\widehat{\Theta}\,\|\,\widetilde\Theta)
-\tr\!\bigl((\widetilde\Theta-\widehat{\Theta})(T-\widehat{\Sigma})\bigr)
\end{equation}
and, with $\gamma(T)=\diag(T)\one^\top+\one\diag(T)^\top-2T$ we have for $R_{\widehat{\Gamma}}:=\bigl\|\widehat{\Theta}^{1/2}(T-\widehat{\Sigma})\widehat{\Theta}^{1/2}\bigr\|_*
=\tfrac12\bigl\|\widehat{\Theta}^{1/2}P(\gamma(T)-\widehat{\Gamma})P\,\Theta^{1/2}\bigr\|_*$
(the subscript reflects the underlying variogram-estimation error), that
\begin{equation}\label{eq:loglik_hr_bound}
-\frac{(d-1)\varepsilon^2}{2(1-\varepsilon)}-\varepsilon\,R_{\widehat{\Gamma}}
\;\le\;\ell(\widetilde\Theta;\overline\Sigma)-\ell(\widehat{\Theta};\overline\Sigma)
\;\le\;\varepsilon\,R_{\widehat{\Gamma}}.
\end{equation}
Under any consistent variogram estimator, $R_{\widehat{\Gamma}}\to 0$ in
probability, so the residual term in \eqref{eq:loglik_hr_bound}
vanishes asymptotically.  Any sparsifier $\widetilde\Theta$
satisfying \eqref{eq:spec_hr} is itself a connected graph Laplacian
(by construction of BSS), hence the precision of an \EMTPtwo{} H\"usler--Reiss
distribution, with algebraic connectivity $\lambda_2$ preserved
within $1\pm\varepsilon$.

The remainder of this subsection collects two guarantees with relevance to
the H\"{u}sler--Reiss case: all variogram entries $\Gamma_{ij}$ are
preserved multiplicatively, and the H\"usler--Reiss logarithmic
exponent measure density $\log\lambda(y)$ is controlled pointwise.

\begin{theorem}\label{thm:variogram_preservation}
Let $\Theta,\widetilde\Theta$ be connected graph Laplacians satisfying
\eqref{eq:spec_hr}, with associated variograms $\Gamma,\widetilde\Gamma$. For
every pair $i\neq j$,
\begin{equation}\label{eq:variogram_preservation}
\frac{\Gamma_{ij}}{1+\varepsilon}\;\le\;\widetilde\Gamma_{ij}\;\le\;\frac{\Gamma_{ij}}{1-\varepsilon}.
\end{equation}
In other words, the variogram is preserved
multiplicatively within a factor $1/(1\pm\varepsilon)$,
independently of $d$ and $\lambda_2(\Theta)$.
\end{theorem}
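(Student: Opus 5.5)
The plan is to write each variogram entry as a quadratic form in the pseudoinverse $\Sigma=\Theta^+$, and then carry the Loewner bound \eqref{eq:spec_hr} from $\Theta$ over to $\Theta^+$ on $\one^\perp$.

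\emph{Step 1: the quadratic-form representation.} Since $\Gamma=\diag(\Sigma)\one^\top+\one\diag(\Sigma)^\top-2\Sigma$ with $\Sigma=\Theta^+$, we get for $i\neq j$
\[
\Gamma_{ij}=\Sigma_{ii}+\Sigma_{jj}-2\Sigma_{ij}=(e_i-e_j)^\top\Theta^+(e_i-e_j).
\]
This is the effective resistance between $i$ and $j$ in the graph with Laplacian $\Theta$. The same formula holds for $\widetilde\Gamma_{ij}$ with $\widetilde\Theta^+$. The key observation is that $z:=e_i-e_j\in\one^\perp$.

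\emph{Step 2: inverting the Loewner order on $\one^\perp$.} Both $\Theta$ and $\widetilde\Theta$ are connected Laplacians, so their kernels are exactly $\spa(\one)$. Hence they restrict to positive definite operators $A,\widetilde A$ on the $(d-1)$-dimensional space $\one^\perp$. Their pseudoinverses are the inverses of these restrictions, extended by zero on $\spa(\one)$. Condition \eqref{eq:spec_hr} says $(1-\varepsilon)A\NSD\widetilde A\NSD(1+\varepsilon)A$ as operators on $\one^\perp$, and all three are positive definite there because $\varepsilon<1$. Matrix inversion is order-reversing on positive definite operators. To see this, conjugate by $A^{-1/2}$ to reduce to $B:=A^{-1/2}\widetilde A A^{-1/2}$ satisfying $(1-\varepsilon)I\NSD B\NSD(1+\varepsilon)I$; inverting eigenvalues gives $(1+\varepsilon)^{-1}I\NSD B^{-1}\NSD(1-\varepsilon)^{-1}I$. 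Conjugating back yields
\[
\tfrac{1}{1+\varepsilon}\Theta^+\NSD\widetilde\Theta^+\NSD\tfrac{1}{1-\varepsilon}\Theta^+\qquad\text{on }\one^\perp.
\]

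\emph{Step 3: conclusion.} Evaluating this sandwich at $z=e_i-e_j$ gives \eqref{eq:variogram_preservation} directly. The constants involve only $\varepsilon$, which explains the independence from $d$ and $\lambda_2(\Theta)$.

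The only delicate point is Step 2. There one must check that the order reversal is legitimate for pseudoinverses of singular matrices. This holds precisely because both Laplacians share the kernel $\spa(\one)$, which is guaranteed by connectedness, and because the test vector lies in $\one^\perp$. With that identification the argument is routine.
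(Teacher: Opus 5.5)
Your proposal is correct and follows essentially the same route as the paper: pass the Loewner sandwich from $\Theta$ to $\Theta^+$ on $\one^\perp$ using order reversal of the inverse on positive definite operators, then evaluate the quadratic forms at $e_i-e_j\in\one^\perp$ via $\Gamma_{ij}=(e_i-e_j)^\top\Theta^+(e_i-e_j)$. Your write-up is slightly more explicit than the paper's, since you justify the order reversal by conjugating with $A^{-1/2}$ and explain why the pseudoinverses are the inverses of the restrictions to $\one^\perp$.
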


The multiplicative variogram control transfers, via the Cayley--Menger
representation \eqref{eq:hr_density_gamma}, to a pointwise bound on the
H\"usler--Reiss logarithmic exponent measure density.  Before stating
it, define
\begin{equation}\label{eq:M_C_def}
M_\Theta\;:=\;\bigl\|\CM(\Gamma)^{-1}\bigr\|_{\mathrm{op}}
\;\le\;\|\Theta\|_{\mathrm{op}}+\|p\|_2+|\sigma^2|,
\end{equation}
where $\Theta\in\RR^{d\times d}$, $p\in\RR^d$, and $\sigma^2\in\RR$
are the blocks of
$\CM(\Gamma)^{-1}=\bigl(\begin{smallmatrix}\Theta & p\\ p^\top & \sigma^2\end{smallmatrix}\bigr)$,
with $\Theta=\Sigma^+$ in the top-left $d\times d$ block (Fiedler--Bapat
identity, see \cite{devriendt2022a}). Set also $\eta_C:=\varepsilon\,\|\Gamma\|_2/(2(1-\varepsilon))$.

\begin{corollary}\label{cor:loglambda_pointwise}
Let $\Theta,\widetilde\Theta$ be connected graph Laplacians satisfying
\eqref{eq:spec_hr}, with associated variograms $\Gamma,\widetilde\Gamma$,
and suppose $\eta_C M_\Theta<1$.  Then for every $y\in\cE$,
\begin{equation}\label{eq:loglambda_pointwise}
\bigl|\log\lambda_{\widetilde\Gamma}(y)-\log\lambda_\Gamma(y)\bigr|
\;\le\;\frac{d+1}{2}\cdot\frac{\eta_C M_\Theta}{1-\eta_C M_\Theta}
\;+\;\frac{\|y\|_2^2+1}{2}\cdot\frac{\eta_C M_\Theta^2}{1-\eta_C M_\Theta},
\end{equation}
and for any sample $Y_1,\ldots,Y_n\in\cE$ with
$\overline{\|Y\|^2}:=n^{-1}\sum_k\|Y_k\|_2^2$,
\begin{equation}\label{eq:loglambda_sample}
\frac{1}{n}\Bigl|\sum_{k=1}^n\bigl[\log\lambda_{\widetilde\Gamma}(Y_k)-\log\lambda_\Gamma(Y_k)\bigr]\Bigr|
\;\le\;\frac{d+1}{2}\cdot\frac{\eta_C M_\Theta}{1-\eta_C M_\Theta}
\;+\;\frac{\overline{\|Y\|^2}+1}{2}\cdot\frac{\eta_C M_\Theta^2}{1-\eta_C M_\Theta}.
\end{equation}
\end{corollary}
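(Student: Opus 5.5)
The plan is a matrix perturbation argument on the Cayley--Menger matrix. Write $C:=\CM(\Gamma)$ and $\widetilde C:=\CM(\widetilde\Gamma)$, so that
\[
\log\lambda_\Gamma(y)=-\tfrac12\log\bigl(-\det\CM(2\pi\Gamma)\bigr)-\tfrac12\,u^\top C^{-1}u,\qquad u:=\begin{pmatrix}y\\1\end{pmatrix}.
\]
The two terms correspond to the two summands in \eqref{eq:loglambda_pointwise}. The perturbation is $\Delta:=\widetilde C-C=\bigl(\begin{smallmatrix}-(\widetilde\Gamma-\Gamma)/2&0\\0&0\end{smallmatrix}\bigr)$, and the first task is to control $\|\Delta\|_{\mathrm{op}}$.

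By \Cref{thm:variogram_preservation}, each entry satisfies $|\widetilde\Gamma_{ij}-\Gamma_{ij}|\le \Gamma_{ij}\,\varepsilon/(1-\varepsilon)$, since $\max\{1/(1-\varepsilon)-1,\,1-1/(1+\varepsilon)\}=\varepsilon/(1-\varepsilon)$. To pass from entries to operator norm I would use that $\Gamma$ has nonnegative entries: for a matrix $E$ with $|E_{ij}|\le c\,\Gamma_{ij}$ entrywise, we have $\|E\|_2\le\| |E| \|_2\le c\|\Gamma\|_2$ by Perron--Frobenius monotonicity of the spectral radius for nonnegative matrices. This gives
\[
\|\Delta\|_{\mathrm{op}}\le \frac{\varepsilon\|\Gamma\|_2}{2(1-\varepsilon)}=\eta_C.
\]
This is the only step where \Cref{thm:variogram_preservation} enters. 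I expect it to be the main subtlety, because one must check that the entrywise-to-operator transfer really costs no dimensional factor. Since $\Gamma\ge 0$ entrywise, the Perron argument handles this cleanly.

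Next, write $\widetilde C=C(I+C^{-1}\Delta)$ with $\|C^{-1}\Delta\|\le \eta_C M_\Theta<1$, so $\widetilde C$ is invertible and the Neumann series applies. For the quadratic term, the resolvent identity $\widetilde C^{-1}-C^{-1}=-\widetilde C^{-1}\Delta C^{-1}$ together with $\|\widetilde C^{-1}\|\le M_\Theta/(1-\eta_C M_\Theta)$ gives
\[
\|\widetilde C^{-1}-C^{-1}\|\le \frac{\eta_C M_\Theta^2}{1-\eta_C M_\Theta}.
\]
Then $\tfrac12|u^\top(\widetilde C^{-1}-C^{-1})u|\le\tfrac12\|u\|_2^2\cdot\eta_CM_\Theta^2/(1-\eta_CM_\Theta)$ with $\|u\|_2^2=\|y\|_2^2+1$, which is the second summand.

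For the determinant term, note that $\CM(2\pi\Gamma)$ differs from $\CM(\Gamma)$ by scaling the top block. Its determinant is a fixed power of $2\pi$ times $\det C$, identical for $\Gamma$ and $\widetilde\Gamma$, so the difference of normalizing terms is $-\tfrac12\log\det(I+C^{-1}\Delta)$; the sign is the same for both, since both Cayley--Menger matrices have negative determinant, and continuity along the segment from $C$ to $\widetilde C$ keeps $I+tC^{-1}\Delta$ invertible. This $(d+1)\times(d+1)$ matrix has eigenvalues $\mu_k$ with $|\mu_k|\le q:=\eta_CM_\Theta<1$. Using $|\log|1+\mu||\le -\log(1-|\mu|)\le |\mu|/(1-|\mu|)$ gives $|\log|\det(I+C^{-1}\Delta)||\le (d+1)\,q/(1-q)$. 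The factor $\tfrac12$ then yields the first summand. The triangle inequality combines the two pieces into \eqref{eq:loglambda_pointwise}.

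Finally, \eqref{eq:loglambda_sample} follows by averaging \eqref{eq:loglambda_pointwise} over $Y_1,\ldots,Y_n$ and using the triangle inequality, since the bound is affine in $\|y\|_2^2$.
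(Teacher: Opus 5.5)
Your proposal is correct and follows the paper's proof in structure: the same Perron--Frobenius transfer from the entrywise bound of \Cref{thm:variogram_preservation} to $\|\widetilde C-C\|_2\le\eta_C$, the same resolvent-identity bound on $\|\widetilde C^{-1}-C^{-1}\|_2$, and the same averaging for \eqref{eq:loglambda_sample}. The only difference is the log-determinant step: the paper compares the singular values $s_i(\widetilde C)$ and $s_i(C)$ pairwise via Weyl's inequality, whereas you factor $\det\widetilde C=\det C\cdot\det(I+C^{-1}\Delta)$ and bound the eigenvalues of $C^{-1}\Delta$ by $\eta_C M_\Theta$, and both give the same $(d+1)\,\eta_C M_\Theta/(1-\eta_C M_\Theta)$ term.
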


Both bounds are $O(\varepsilon)$ as $\varepsilon\downarrow 0$ and
independent of the spectral gap $\lambda_2(\Theta)$.
The sample-average bound \eqref{eq:loglambda_sample} controls the
data-dependent part of the H\"usler--Reiss  log-likelihood:
writing $\ell^{\mathrm{HR}}(\Theta;Y_{1:n})
=\sum_k\log\lambda_\Gamma(Y_k)-n\log V(\Gamma)$, the remaining term
$\log V(\widetilde\Gamma)-\log V(\Gamma)$ is deterministic in
$\Gamma,\widetilde\Gamma$ and depends only on the variogram
perturbation $\widetilde\Gamma-\Gamma$, which is itself controlled
multiplicatively by \eqref{eq:variogram_preservation}.  We do not
develop an explicit bound on this term here.
The condition $\eta_C M_\Theta<1$ holds for any fixed $\Theta$ once
$\varepsilon$ is small enough.

\subsection{Experiments}\label{sec:hr_experiments}

We evaluate Spectral-HR against the \EMTPtwo{}-MLE \citep{REZ2023} and
the two majority-vote variants of eglearn \citep{ELV2025},
neighborhood-selection (eglearn-NS) and graphical-lasso (eglearn-glasso),
on two H\"{u}sler--Reiss real-data benchmark datasets. The upper Danube river-network discharge dataset
\citep{engelkehitz} has a modest sample size, which makes it difficult to split the data with a
high enough sample size (having in mind that we also have to threshold the data), so methods are compared by BIC and AIC on the full
thresholded sample.  The Texas Cluster of U.S.\ airports \citep{HES2022}
is a much larger dataset, so we can use the standard
train/test date split of \citet{HES2022} and held-out H\"usler--Reiss
log-likelihood as the headline metric.  In both experiments the
log-likelihood used for selection and held-out evaluation is the
actual H\"usler--Reiss log-likelihood
$\ell^{\mathrm{HR}}(\widehat\Theta):=\sum_k\log f_{\widehat\Gamma}(Y_k)$.

\subsection*{Upper Danube river-network discharges
\citep{engelkehitz}.}
This is a standard H\"{u}sler--Reiss benchmark.
The data are daily seasonal-residual discharges at $d=31$ gauging
stations along the upper Danube and its tributaries, with $n=428$
exceedance-cluster days as bundled in the \texttt{graphicalExtremes}
R package.  Since there is no canonical extremal ground-truth graph
for river discharges, we use the whole dataset both for fitting and for the
tuning criteria, and compare methods purely on in-sample model
selection.  At threshold $p=0.90$ the thresholded sample has
$n^{\mathrm{thr}}=117$ exceedances, on which we evaluate both
\begin{equation}\label{eq:bic_aic_hr}
\mathrm{BIC}(\widehat\Theta)
\;=\;-2\,\ell^{\mathrm{HR}}(\widehat\Theta)
\;+\;k(\widehat\Theta)\,\log n^{\mathrm{thr}},
\qquad
\mathrm{AIC}(\widehat\Theta)
\;=\;-2\,\ell^{\mathrm{HR}}(\widehat\Theta)
\;+\;2\,k(\widehat\Theta),
\end{equation}
where $k(\widehat\Theta)$ is the edge count of the graph given by $\widehat\Theta$.  Each tunable method is
selected twice: once by BIC, once by AIC.
The \EMTPtwo{}-MLE has no tuning parameter.  Tuning grids are
$\rho\in[0.005,0.5]$ for eglearn-NS, $\rho\in[0.005,0.75]$
($20$ equidistant points) for eglearn-glasso, and
$\eta\in[1.1,100]$ for Spectral-HR.

\Cref{tab:realdata_hr_danube} reports the BIC- and AIC-selected fits.
Spectral-HR gives the sparsest fits by a clear margin ($46$--$53$
edges, vs $99$--$130$ for eglearn-NS, $171$ for eglearn-glasso, and
$67$ for \EMTPtwo{}). eglearn-NS attains both the best BIC and the
best AIC, at much higher edge counts than Spectral-HR.  Against the
dense \EMTPtwo{} baseline, Spectral-HR improves both criteria
($\Delta\mathrm{BIC}=-71.6$, $\Delta\mathrm{AIC}=-28.6$) while using
roughly $70$\% of its edges; it trails the eglearn-NS winners on the
absolute criteria ($87$ BIC points / $302$ AIC points behind), trading
information-criterion fit for substantially sparser graphs.
\begin{table}[ht]
\centering
\begin{tabular}{llrrrr}
\toprule
Method & Selected by & Param & Edges & BIC & AIC \\
\midrule
\EMTPtwo{}                         & (none) & --        & $67$  & $1{,}336.07$           & $1{,}150.98$ \\
\midrule
\multirow{2}{*}{eglearn-NS}        & BIC    & $0.032$   & $99$  & $\mathbf{1{,}177.01}$  & $903.54$ \\
                                   & AIC    & $0.023$   & $130$ & $1{,}179.36$           & $\mathbf{820.30}$ \\
\midrule
eglearn-glasso                     & BIC \& AIC & $0.044$ & $171$ & $1{,}342.77$         & $870.42$ \\
\midrule
\multirow{2}{*}{Spectral-HR}       & BIC    & $4$       & $\mathbf{46}$ & $1{,}264.46$   & $1{,}137.40$ \\
                                   & AIC    & $25$      & $53$  & $1{,}268.75$           & $1{,}122.34$ \\
\bottomrule
\end{tabular}
\caption{Upper Danube river-network discharges
(H\"{u}sler--Reiss, $d=31$, $n=428$ days, threshold $p=0.90$,
$n^{\mathrm{thr}}=117$ exceedances).  Whole dataset used for
fitting and for both selection criteria \eqref{eq:bic_aic_hr}.
Each tunable method appears twice, once with its BIC-selected
parameter and once with its AIC-selected parameter.  Bold: lowest
BIC and lowest AIC across all rows.}
\label{tab:realdata_hr_danube}
\end{table}

\begin{figure}[ht]
\centering
\begin{minipage}{0.32\linewidth}
\centering
\resizebox{\linewidth}{!}{\input{flowgraph_directed}}\\[-0.4em]
{\small (a) Flow graph}
\end{minipage}\hfill
\begin{minipage}{0.32\linewidth}
\centering
\resizebox{\linewidth}{!}{\input{emtp2_graph}}\\[-0.4em]
{\small (b) \EMTPtwo{} ($67$ edges)}
\end{minipage}\hfill
\begin{minipage}{0.32\linewidth}
\centering
\resizebox{\linewidth}{!}{\input{spectral_hr_bic_graph}}\\[-0.4em]
{\small (c) Spectral-HR, $\eta=4$ ($46$ edges)}
\end{minipage}
\caption{Danube river network: (a) the directed flow graph from
\citet{engelkehitz}; (b) the \EMTPtwo{}-MLE fit; and (c) the
Spectral-HR fit at the BIC-selected $\eta=4$.  Node positions and
numbering match across panels.}
\label{fig:danube_graphs}
\end{figure}

Spectral-HR's edge count grows monotonically over $\eta$ and stays
below \EMTPtwo{}'s $67$ edges across the entire grid.  eglearn-NS
admits a usable fit only for $\rho<0.132$, and past the
BIC-selected $\rho=0.032$ both criteria rise above the dense
\EMTPtwo{} baseline; eglearn-glasso attains its joint BIC/AIC
minimum at the second grid point ($\rho=0.044$, $171$ edges).  Full
sweeps are in \Cref{tab:danube_paths} (\Cref{sec:appendix_sweeps}),
where we observe that the eglearn performance decays for stronger
sparsity settings. Therefore, our impression for the Danube data set is that our method
behaves better if we are looking for a sparser model with reasonable performance.

\subsection*{Texas Cluster of U.S.\ airports \citep{HES2022}.}
The data are daily total flight delays at $d=29$ airports, accessed via
the \texttt{graphicalExtremes} R package, with the predefined
train/test date split of \citet{HES2022}
($n_{\mathrm{train}}=1{,}764$, $n_{\mathrm{test}}=1{,}839$ days). 
At
threshold $p=0.85$ this yields
$n^{\mathrm{thr}}_{\mathrm{tr}}=1{,}069$ training and
$n^{\mathrm{thr}}_{\mathrm{te}}=1{,}336$ held-out 
exceedances.
There is no
canonical extremal ground-truth graph for flight delays, so we evaluate
each fit by edge count, training H\"usler--Reiss BIC (the tuning criterion),
and held-out H\"usler--Reiss log-likelihood.  Tuning grids
($\rho\in[0.005,0.5]$ for eglearn, $\eta\in[1.1,100]$ for Spectral-HR)
are chosen wide enough that the BIC-selected parameter is interior on
the grid.

\Cref{tab:realdata_hr} reports the BIC-selected results.  Spectral-HR
gives the sparsest fit by a clear margin (95 edges, vs 130 for \EMTPtwo{}
and 211--227 for the eglearn variants).  \EMTPtwo{} attains the best
held-out test log-likelihood, with Spectral-HR a close second
($-23{,}104$ vs $-23{,}209$) at a $27\%$ reduction in edge count.  The
eglearn variants achieve the best training BIC but visibly overfit:
their held-out log-likelihoods ($-23{,}851$ and $-23{,}900$) are
substantially worse than both \EMTPtwo{} and Spectral-HR despite the
lower training BIC.

\begin{table}[ht]
\centering
\begin{tabular}{lrrr}
\toprule
Method & Edges & Train BIC & HR test log-lik.\ \\
\midrule
\EMTPtwo{}          & $130$         & $38{,}135.6$              & $\mathbf{-23{,}104.1}$ \\
eglearn-NS          & $211$         & $\mathbf{37{,}288.9}$     & $-23{,}899.8$ \\
eglearn-glasso      & $227$         & $37{,}369.6$              & $-23{,}851.3$ \\
Spectral-HR         & $\mathbf{95}$ & $37{,}967.5$              & $-23{,}208.8$ \\
\bottomrule
\end{tabular}
\caption{Texas Cluster of U.S. airports, daily total flight delays
(H\"{u}sler--Reiss, $d=29$, training/test split sizes
$n_{\mathrm{train}}=1{,}764$, $n_{\mathrm{test}}=1{,}839$ from
\citet{HES2022}, threshold $p=0.85$; training/test 
exceedance counts $n^{\mathrm{thr}}_{\mathrm{tr}}=1{,}069$,
$n^{\mathrm{thr}}_{\mathrm{te}}=1{,}336$).  Tuning by H\"usler--Reiss BIC
on the training sample; selected parameters $\rho=0.032$
(eglearn-NS), $\rho=0.05$ (eglearn-glasso), $\eta=20$ (Spectral-HR).
Train BIC is the tuning criterion
$-2\ell^{\mathrm{HR}}(\widehat\Theta;\,Y^{\mathrm{tr}}_{\mathrm{thr}})
+ k(\widehat\Theta)\log n^{\mathrm{thr}}_{\mathrm{tr}}$; the last
column is the held-out H\"{u}sler--Reiss log-likelihood.  Bold: best on
each metric.}
\label{tab:realdata_hr}
\end{table}

\section{Discussion}\label{sec:conclusion}
We extended the post-estimation spectral-sparsification approach of
\citet{SpectralMTP2026} from the full-rank Gaussian \MTPtwo{} setting
to two families of graphical models with a graph Laplacian precision matrix,
the LCGGM and the \EMTPtwo{} HRGM. Given that both models are estimated
in a similar way, we can use the same algorithm in both settings.
In each setting we run BSS on the dense estimator and
refit on the sparsified support. A Bregman--Loewner identity
on $\one^\perp$ underpins the analysis of both settings.

The powerful spectral sparsification tools, together with the models we
consider, ensure that the sparsified matrices are still valid statistical
parameters that are spectrally close to the original ones. For HRGMs, we
provide additional guarantees.
We evaluate Spectral-LCGGM empirically in simulations, where
we observe that our method serves as a post-processing operation which reduces
the complexity of the model while keeping performance high, or even improving
it in some instances. We also observe that our proposed method seems
to be quite precise at structure learning in this setting.

Additionally, we evaluate Spectral-HR in two real data examples, where we
observe that in practice our method seems to offer a trade-off between
model fit and complexity: we do not necessarily improve the model fit,
but we manage to simplify it without significant loss in performance.

Regarding future work, we believe there is still much to explore around these methods.
First, we believe that the refit step could be modified to include constraints
beyond positivity. Currently, we refit by imposing the sparsified support under
a Laplacian constraint, but we would like to explore whether other refitting
options behave differently.
Additionally, one may consider other spectral sparsification methods for the sparsification step. Regarding the performance of our methods, we believe it is possible to obtain more detailed theory specific to Spectral-HR, and overall we would find it interesting to see whether support recovery results are achievable, since the methods seem to recover structure quite well in simulations. Finally, we would like to explore more settings for simulations and real data experiments to further support our methods with additional empirical evidence of their performance.

\section*{Acknowledgements}
The research of Aida Abiad is supported by NWO (Dutch Research Council) through the grant  VI.Vidi.213.085.

\bibliographystyle{plainnat}
\bibliography{references}

\appendix

\section{Proofs}\label{sec:proofs}

\subsection{Bregman--Loewner Lemmas}\label{sec:proofs_toolbox}

The proof of \Cref{thm:main_lcggm} (which, applied under the
substitution of \Cref{sec:hr_theory}, also yields the H\"usler--Reiss identities
\eqref{eq:bregman_hr}--\eqref{eq:loglik_hr_bound}) and the associated
refit corollaries reduce to two elementary lemmas about graph Laplacians
sharing the kernel $\spa(\one)$.  The full-rank versions of these
lemmas, for positive definite matrices, were proved in
\citet[Sec.~3]{SpectralMTP2026}.  The two statements below adapt
them to the degenerate-rank setting that arises here for both LCGGMs
and the H\"{u}sler--Reiss surrogate.

\begin{lemma}\label{lem:kl_loewner}
Let $K,\widetilde K$ be positive semidefinite matrices with common kernel
$\spa(\one)$, and suppose $(1-\varepsilon)K\NSD\widetilde K\NSD(1+\varepsilon)K$
on $\one^\perp$ for some $\varepsilon\in(0,1)$.  Then
$D_{\mathrm{KL}}^\perp(K\,\|\,\widetilde K)\ge 0$, and
\begin{equation}\label{eq:kl_loewner}
2D_{\mathrm{KL}}^\perp(K\,\|\,\widetilde K)
\;\le\;\frac{(d-1)\varepsilon^2}{2(1-\varepsilon)}.
\end{equation}
In particular, $2D_{\mathrm{KL}}^\perp\le(d-1)\varepsilon^2$ for
$\varepsilon\le\tfrac12$.
\end{lemma}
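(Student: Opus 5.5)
The plan is to reduce everything to the eigenvalues of the relative operator $M:=(K^+)^{1/2}\widetilde K(K^+)^{1/2}$ restricted to $\one^\perp$. Since $K$ and $\widetilde K$ are positive semidefinite with common kernel $\spa(\one)$, both act invertibly on $\one^\perp$. Choose an orthonormal basis $U\in\RR^{d\times(d-1)}$ of $\one^\perp$ and set $A:=U^\top KU$ and $B:=U^\top\widetilde KU$. Both are $(d-1)\times(d-1)$ positive definite matrices. Moreover $K^+=UA^{-1}U^\top$, $\tr(\widetilde KK^+)=\tr(BA^{-1})$, and the nonzero eigenvalues of $\widetilde KK^+$ are those of $A^{-1/2}BA^{-1/2}$, so $\Det(\widetilde KK^+)=\det(A^{-1/2}BA^{-1/2})$.

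Let $\mu_1,\dots,\mu_{d-1}>0$ be the eigenvalues of $A^{-1/2}BA^{-1/2}$. Then
\[
2D_{\mathrm{KL}}^\perp(K\,\|\,\widetilde K)=\sum_{i=1}^{d-1}\bigl(\mu_i-1-\log\mu_i\bigr).
\]
The hypothesis $(1-\varepsilon)K\NSD\widetilde K\NSD(1+\varepsilon)K$ on $\one^\perp$ reads $(1-\varepsilon)A\NSD B\NSD(1+\varepsilon)A$. Conjugating by $A^{-1/2}$ gives $\mu_i\in[1-\varepsilon,1+\varepsilon]$ for every $i$.

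Nonnegativity is immediate because $\phi(\mu):=\mu-1-\log\mu\ge0$ for all $\mu>0$. For the upper bound, it suffices to show the scalar inequality
\[
\phi(\mu)\le\frac{(\mu-1)^2}{2(1-\varepsilon)}\qquad\text{for }\mu\in[1-\varepsilon,1+\varepsilon].
\]
Summing over the $d-1$ eigenvalues and using $(\mu-1)^2\le\varepsilon^2$ then yields \eqref{eq:kl_loewner}. One could also retain the sharper form $\sum(\mu_i-1)^2/(2(1-\varepsilon))$, but it is not needed.

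The scalar inequality follows from Taylor's theorem with Lagrange remainder about $\mu=1$. Since $\phi(1)=\phi'(1)=0$ and $\phi''(\xi)=1/\xi^2$, we get $\phi(\mu)=(\mu-1)^2/(2\xi^2)$ for some $\xi$ between $1$ and $\mu$, so $\xi\ge1-\varepsilon$. This gives $\phi(\mu)\le(\mu-1)^2/(2(1-\varepsilon)^2)$, which has one factor of $(1-\varepsilon)$ too many in the denominator. This is the only delicate point.

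I would handle it with the integral form $\phi(\mu)=\int_1^\mu(\mu-t)/t^2\,dt$ and split into two cases.
\begin{itemize}
\item For $\mu\ge1$: $t^{-2}\le1$, so $\phi(\mu)\le(\mu-1)^2/2$.
\item For $\mu=1-\delta$ with $\delta\in[0,\varepsilon]$: expand $-\log(1-\delta)-\delta=\sum_{k\ge2}\delta^k/k\le\tfrac{\delta^2}{2}\sum_{j\ge0}\delta^j=\delta^2/(2(1-\delta))\le\varepsilon^2/(2(1-\varepsilon))$.
\end{itemize}
Both cases give the claimed bound.

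Finally, for $\varepsilon\le\tfrac12$ we have $1/(2(1-\varepsilon))\le1$, which gives the stated consequence $2D_{\mathrm{KL}}^\perp\le(d-1)\varepsilon^2$.
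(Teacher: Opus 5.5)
Your proof is correct and follows the paper's argument. You reduce to the generalized eigenvalues $\mu_i\in[1-\varepsilon,1+\varepsilon]$ of $\widetilde K$ relative to $K$ on $\one^\perp$, write $2D_{\mathrm{KL}}^\perp=\sum_i(\mu_i-1-\log\mu_i)$, and bound each term by $\varepsilon^2/(2(1-\varepsilon))$. Your two-case argument (integral form for $\mu\ge 1$, power series for $\mu<1$) is simply an explicit proof of the scalar inequality $t-\log(1+t)\le t^2/(2(1-|t|))$ that the paper invokes, and you are right that the crude Lagrange remainder alone would only give $(1-\varepsilon)^{-2}$.
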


\begin{proof}
Working on $\one^\perp$, let $\mu_1,\ldots,\mu_{d-1}$ be the eigenvalues of
$\widetilde K K^+$.  The spectral approximation gives
$\mu_i\in[1-\varepsilon,1+\varepsilon]$.  The standard KL identity for
degenerate centered Gaussians on $\one^\perp$ is
$2D_{\mathrm{KL}}^\perp(K\,\|\,\widetilde K)=\sum_{i=1}^{d-1}(\mu_i-1-\log\mu_i)\ge 0$.
Writing $\mu_i=1+t_i$ with $|t_i|\le\varepsilon<1$ and applying the bound
$t-\log(1+t)\le t^2/(2(1-|t|))$ for $|t|<1$ (Taylor with integral remainder)
yields $\mu_i-1-\log\mu_i\le\varepsilon^2/(2(1-\varepsilon))$; summing gives
\eqref{eq:kl_loewner}.  The simpler bound $(d-1)\varepsilon^2$ follows for
$\varepsilon\le\tfrac12$ since $1/(2(1-\varepsilon))\le 1$.
\end{proof}

\begin{lemma}\label{lem:optr}
Let $K,\widetilde K$ be positive semidefinite matrices with common kernel
$\spa(\one)$ and $-\varepsilon K\NSD\widetilde K-K\NSD\varepsilon K$ on
$\one^\perp$.  Then for every symmetric matrix $A$ supported on $\one^\perp$,
\[
\bigl|\tr\!\bigl((\widetilde K-K)A\bigr)\bigr|
\;\le\;\varepsilon\,\bigl\|K^{1/2}AK^{1/2}\bigr\|_*,
\]
where $K^{1/2}$ is the symmetric square root of $K$ on $\one^\perp$.
\end{lemma}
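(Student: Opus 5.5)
The plan is to move to normalized coordinates in which the Loewner hypothesis becomes an operator-norm bound, and then apply trace-norm/operator-norm duality. Let $K^{1/2}$ be the symmetric square root of $K$ on $\one^\perp$, and let $K^{+/2}:=(K^{1/2})^+$ be its pseudoinverse, so that $K^{1/2}K^{+/2}=K^{+/2}K^{1/2}=P$. Define
\[
E:=K^{+/2}(\widetilde K-K)K^{+/2}.
\]
This matrix is symmetric, vanishes on $\spa(\one)$, and maps $\one^\perp$ into itself.

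The first step is to show $\|E\|_{\mathrm{op}}\le\varepsilon$. Take any $z\in\one^\perp$ and set $w:=K^{+/2}z$. Then $w\in\one^\perp$, and the hypothesis $-\varepsilon K\NSD\widetilde K-K\NSD\varepsilon K$ on $\one^\perp$ gives
\[
|w^\top(\widetilde K-K)w|\le\varepsilon\,w^\top Kw=\varepsilon\,z^\top K^{+/2}KK^{+/2}z=\varepsilon\|z\|^2 .
\]
Hence $|z^\top Ez|\le\varepsilon\|z\|^2$ on $\one^\perp$. Since $E$ is zero on $\spa(\one)$ and symmetric, all its eigenvalues lie in $[-\varepsilon,\varepsilon]$.

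The second step rewrites the trace. Because $\widetilde K-K$ has kernel containing $\one$ and range in $\one^\perp$, we have $\widetilde K-K=P(\widetilde K-K)P=K^{1/2}EK^{1/2}$. By cyclicity,
\[
\tr\bigl((\widetilde K-K)A\bigr)=\tr\bigl(E\,K^{1/2}AK^{1/2}\bigr).
\]
The identity $P(\widetilde K-K)P=K^{1/2}EK^{1/2}$ is the only place needing care: one must check that $K^{1/2}K^{+/2}=P$, which holds because $K^{1/2}$ is invertible on $\one^\perp$ with kernel $\spa(\one)$. The assumption that $A$ is supported on $\one^\perp$ is not strictly needed for this step, but it makes $K^{1/2}AK^{1/2}$ the natural object.

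The final step applies Hölder's inequality for Schatten norms, $|\tr(XY)|\le\|X\|_{\mathrm{op}}\|Y\|_*$, with $X=E$ and the symmetric matrix $Y=K^{1/2}AK^{1/2}$. Combined with the first step, this gives
\[
\bigl|\tr\bigl((\widetilde K-K)A\bigr)\bigr|\le\varepsilon\,\|K^{1/2}AK^{1/2}\|_* .
\]
No step is a real obstacle; the only care point is the pseudoinverse bookkeeping in the second step.
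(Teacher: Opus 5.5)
Your proof is correct and takes essentially the paper's approach: whiten $\widetilde K-K$ by the inverse square root of $K$ on $\one^\perp$ to get an operator-norm bound of $\varepsilon$, then apply the operator--trace H\"older inequality. The only difference is bookkeeping: you stay in ambient coordinates with the pseudoinverse $K^{+/2}$, whereas the paper compresses to $\one^\perp$ via an orthonormal basis $V$, and you spell out the operator-norm equivalence that the paper simply asserts.
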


\begin{proof}
Let $V\in\RR^{d\times(d-1)}$ have orthonormal columns spanning $\one^\perp$
and write $\bar X:=V^\top XV$ for any symmetric matrix $X$ supported on
$\one^\perp$.  Then $\tr(XY)=\tr(\bar X\bar Y)$ for such matrices, and
$\bar K\succ 0$.  The spectral approximation is equivalent to
$\|\bar K^{-1/2}(\bar{\widetilde K}-\bar K)\bar K^{-1/2}\|_{\mathrm{op}}\le\varepsilon$.
The operator--trace H\"older inequality gives
\[
\bigl|\tr\!\bigl((\widetilde K-K)A\bigr)\bigr|
=\bigl|\tr\!\bigl(\bar K^{-1/2}(\bar{\widetilde K}-\bar K)\bar K^{-1/2}\cdot\bar K^{1/2}\bar A\bar K^{1/2}\bigr)\bigr|
\le\varepsilon\,\bigl\|\bar K^{1/2}\bar A\bar K^{1/2}\bigr\|_*,
\]
and $\|\bar K^{1/2}\bar A\bar K^{1/2}\|_*=\|K^{1/2}AK^{1/2}\|_*$.
\end{proof}

\subsection{Proofs for Section~\ref{sec:lcggm}}\label{sec:proofs_lcggm}

\begin{lemma}[Closed form of the degenerate-Gaussian KL]\label{lem:kl_perp_identity}
Let $K,\widetilde K$ be positive semidefinite matrices with common
kernel $\spa(\one)$.  The KL divergence between the two centered
degenerate Gaussians on $\one^\perp$ with precisions $K$ and
$\widetilde K$ is
\begin{equation}\label{eq:kl_perp_identity}
D_{\mathrm{KL}}^\perp(K\,\|\,\widetilde K)
\;=\;\tfrac12\bigl[\tr(\widetilde K K^+)-(d-1)-\log\Det(\widetilde K K^+)\bigr]
\;=\;\tfrac12\sum_{i=1}^{d-1}(\mu_i-1-\log\mu_i),
\end{equation}
where $\mu_1,\dots,\mu_{d-1}$ are the eigenvalues of $\widetilde K K^+$
on $\one^\perp$ (equivalently, the generalized eigenvalues of $\widetilde K$
relative to $K$ on $\one^\perp$).  In particular,
$D_{\mathrm{KL}}^\perp(K\,\|\,\widetilde K)\ge 0$ with equality iff
$\widetilde K=K$ on $\one^\perp$.
\end{lemma}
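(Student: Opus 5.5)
I will reduce to the nondegenerate case by an orthonormal change of coordinates on $\one^\perp$, exactly as in the proof of \Cref{lem:optr}. Fix $V\in\RR^{d\times(d-1)}$ with orthonormal columns spanning $\one^\perp$, and set $\bar K:=V^\top KV$ and $\bar{\widetilde K}:=V^\top\widetilde KV$. Since both matrices are positive semidefinite with kernel exactly $\spa(\one)$, we get $\bar K,\bar{\widetilde K}\succ 0$. We also have $K=V\bar KV^\top$ and $K^+=V\bar K^{-1}V^\top$.

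The map $x\mapsto u=V^\top x$ is an isometry from $\one^\perp$ onto $\RR^{d-1}$, so it identifies Lebesgue measure on $\one^\perp$ with Lebesgue measure on $\RR^{d-1}$. Under this map the density \eqref{eq:lcggm_density} with precision $K$ becomes the ordinary $N(0,\bar K^{-1})$ density. This uses $x^\top Kx=u^\top\bar Ku$ and $\Det(K)=\det\bar K$, because the nonzero eigenvalues of $V\bar KV^\top$ are exactly those of $\bar K$. KL divergence is invariant under this bijection, since both measures are pushed forward by the same map. Hence
\[
D_{\mathrm{KL}}^\perp(K\,\|\,\widetilde K)=D_{\mathrm{KL}}\bigl(N(0,\bar K^{-1})\,\|\,N(0,\bar{\widetilde K}^{-1})\bigr).
\]
The classical formula for nondegenerate Gaussians then gives
\[
\tfrac12\bigl[\tr(\bar{\widetilde K}\bar K^{-1})-(d-1)-\log\det(\bar{\widetilde K}\bar K^{-1})\bigr],
\]
which I would verify directly as $\mathbb E_{K}[\log f_K-\log f_{\widetilde K}]$ using $\mathbb E[uu^\top]=\bar K^{-1}$.

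Next I translate this back to the statement's notation. Since $V^\top V=I_{d-1}$,
\[
\widetilde KK^+=V\bar{\widetilde K}\bar K^{-1}V^\top .
\]
So $\tr(\widetilde KK^+)=\tr(\bar{\widetilde K}\bar K^{-1})$. The nonzero spectrum of $\widetilde KK^+$ coincides with the spectrum of $\bar{\widetilde K}\bar K^{-1}$, since $AB$ and $BA$ share nonzero eigenvalues. These eigenvalues are positive because $\bar{\widetilde K}\bar K^{-1}$ is similar to $\bar K^{-1/2}\bar{\widetilde K}\bar K^{-1/2}\succ0$, so $\Det(\widetilde KK^+)=\det(\bar{\widetilde K}\bar K^{-1})$. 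Calling these eigenvalues $\mu_i$ gives the second expression $\tfrac12\sum_i(\mu_i-1-\log\mu_i)$. The same similarity shows the $\mu_i$ are the generalized eigenvalues of the pencil $(\bar{\widetilde K},\bar K)$, which is the claimed equivalence.

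Finally, the scalar function $g(\mu)=\mu-1-\log\mu$ satisfies $g\ge0$ on $(0,\infty)$, with equality only at $\mu=1$. This gives nonnegativity. Equality in the divergence forces all $\mu_i=1$. Then $\bar K^{-1/2}\bar{\widetilde K}\bar K^{-1/2}$ is symmetric with all eigenvalues equal to $1$, so it equals $I$, which gives $\bar{\widetilde K}=\bar K$, i.e. $\widetilde K=K$ on $\one^\perp$. The converse is immediate.

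The only delicate step is the pseudo-determinant and pseudoinverse bookkeeping. Specifically, one must check that $\Det$ of the non-symmetric product $\widetilde KK^+$ is well defined and equals $\det$ of its compression. This is handled by the factorization through $V$ above.
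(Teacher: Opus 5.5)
Your proof is correct and follows essentially the same route as the paper: compute $\mathbb{E}_K[\log f_K-\log f_{\widetilde K}]$ from the second moment of the Gaussian, rewrite the result in terms of the eigenvalues $\mu_i$ of $\widetilde K K^+$ on $\one^\perp$, and conclude with $\mu-1-\log\mu\ge 0$. Working in the orthonormal coordinates $V$ makes explicit two steps the paper only asserts: the pseudo-determinant identity $\log\Det K-\log\Det\widetilde K=-\log\Det(\widetilde K K^+)$, which follows from $\widetilde K K^+=V\bar{\widetilde K}\bar K^{-1}V^\top$, and the equality case, which follows from the symmetric similarity to $\bar K^{-1/2}\bar{\widetilde K}\bar K^{-1/2}$.
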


\begin{proof}
On $\one^\perp$, both Gaussians are absolutely continuous with respect
to the $(d-1)$-dimensional Lebesgue measure on that hyperplane, with
densities
\[
f_K(y)=\frac{(\Det K)^{1/2}}{(2\pi)^{(d-1)/2}}\,
       \exp\!\bigl(-\tfrac12 y^\top K y\bigr),
\quad
f_{\widetilde K}(y)=\frac{(\Det \widetilde K)^{1/2}}{(2\pi)^{(d-1)/2}}\,
       \exp\!\bigl(-\tfrac12 y^\top \widetilde K y\bigr).
\]
Both densities live on the same measure space because $K$ and
$\widetilde K$ share the kernel $\spa(\one)$.  The log-density
difference equals
$\tfrac12\log\Det K-\tfrac12\log\Det\widetilde K-\tfrac12 y^\top(K-\widetilde K)y$.
Taking expectation under $Y\sim f_K$ with $\mathbb{E}_K[YY^\top]=K^+$,
\[
\mathbb{E}_K[Y^\top(K-\widetilde K)Y]
=\tr((K-\widetilde K)K^+)
=\tr(K K^+)-\tr(\widetilde K K^+)
=(d-1)-\tr(\widetilde K K^+),
\]
using $KK^+=P$, the projector onto $\one^\perp$ with $\tr(P)=d-1$.
Substituting,
\[
D_{\mathrm{KL}}^\perp(K\,\|\,\widetilde K)
=\tfrac12\bigl[\tr(\widetilde K K^+)-(d-1)+\log\Det K-\log\Det\widetilde K\bigr].
\]
Since $K$ and $\widetilde K$ share kernel $\spa(\one)$, both are
(pseudo-) invertible on $\one^\perp$ and the determinant product rule on
$\one^\perp$ gives
$\log\Det K-\log\Det\widetilde K=-\log\Det(\widetilde K K^+)$,
yielding the first equality in \eqref{eq:kl_perp_identity}.  For the
second, let $\mu_i$ be the eigenvalues of $\widetilde K K^+$ on
$\one^\perp$; then $\tr(\widetilde K K^+)=\sum_i\mu_i$,
$\log\Det(\widetilde K K^+)=\sum_i\log\mu_i$, and $d-1=\sum_i 1$, so the
three terms combine into $\sum_i(\mu_i-1-\log\mu_i)$.
Non-negativity, with equality iff every $\mu_i=1$, follows from
$\mu-1-\log\mu\ge 0$ for $\mu>0$.
\end{proof}

\begin{proof}[Proof of \Cref{thm:main_lcggm}]
By the KL identity for degenerate Gaussians on $\one^\perp$,
\[
2D_{\mathrm{KL}}^\perp(\widehat{K}\,\|\,\widetilde K)
=\tr(\widetilde K \widehat{K}^+)-(d-1)+\log\Det \widehat{K}-\log\Det\widetilde K,
\]
so $\log\Det\widetilde K-\log\Det \widehat{K}=\tr(\widetilde K \widehat{K}^+)-(d-1)-2D_{\mathrm{KL}}^\perp(\widehat{K}\,\|\,\widetilde K)$.
Since $\widehat{K}\widehat{K}^+=P$ is the projector onto $\one^\perp$, $\tr(\widehat{K}\widehat{K}^+)=d-1$ and
$\tr(\widetilde K \widehat{K}^+)-(d-1)=\tr((\widetilde K-\widehat{K})\widehat{K}^+)$.  Substituting into
$\ell(\widetilde K;T)-\ell(\widehat{K};T)=(\log\Det\widetilde K-\log\Det \widehat{K})-\tr((\widetilde K-\widehat{K})T)$
gives
\[
\ell(\widetilde K;T)-\ell(\widehat{K};T)
=-2D_{\mathrm{KL}}^\perp(\widehat{K}\,\|\,\widetilde K)+\tr\!\bigl((\widetilde K-\widehat{K})(\widehat{K}^+-T)\bigr),
\]
which is \eqref{eq:bregman_lcggm}.  The upper-side bound
\eqref{eq:loglik_lcggm_upper} follows by dropping the non-positive KL term
and applying \Cref{lem:optr} with $A=T-\widehat{K}^+$.  For
\eqref{eq:loglik_lcggm_bound}, combine \Cref{lem:kl_loewner}
($0\le 2D_{\mathrm{KL}}^\perp\le(d-1)\varepsilon^2/(2(1-\varepsilon))$ for
$\varepsilon\in(0,1)$) with \Cref{lem:optr}.
\end{proof}

\subsection{Proofs for Section~\ref{sec:hr}}\label{sec:proofs_hr}

The H\"usler--Reiss Bregman identity \eqref{eq:bregman_hr} and the bound
\eqref{eq:loglik_hr_bound} are immediate from \Cref{thm:main_lcggm}
under the substitution of \Cref{sec:hr_theory}.

\begin{proof}[Proof of \Cref{thm:variogram_preservation}]
On $\one^\perp$ both Laplacians are positive definite.  The spectral
approximation \eqref{eq:spec_hr} together with operator-monotonicity of the
inverse on positive definite operators give
$\Theta^+/(1+\varepsilon)\NSD\widetilde\Theta^+\NSD\Theta^+/(1-\varepsilon)$
on $\one^\perp$.  For each pair $i\neq j$, the vector
$b_{ij}:=e_i-e_j$ lies in $\one^\perp$, so
$\widetilde\Gamma_{ij}=b_{ij}^\top\widetilde\Theta^+b_{ij}$ inherits the
two-sided bound, giving \eqref{eq:variogram_preservation}.  The same
argument applies to any pair, so all variogram entries are preserved multiplicatively.
\end{proof}

\begin{proof}[Proof of \Cref{cor:loglambda_pointwise}]
Let $v(y):=\binom{y}{1}\in\RR^{d+1}$.  By the H\"{u}sler--Reiss density
representation \eqref{eq:hr_density_gamma},
\[
\log\lambda_{\widetilde\Theta}(y)-\log\lambda_\Theta(y)
=-\tfrac12\bigl(\log|\det\widetilde C|-\log|\det C|\bigr)
  -\tfrac12\,v(y)^\top(\widetilde C^{-1}-C^{-1})\,v(y).
\]

\emph{Step 1 (Cayley--Menger perturbation).}  The Cayley--Menger blocks
yield $\widetilde C-C=\bigl(\begin{smallmatrix}-(\widetilde\Gamma-\Gamma)/2 & 0\\ 0 & 0\end{smallmatrix}\bigr)$,
so $\|\widetilde C-C\|_2=\tfrac{1}{2}\|\widetilde\Gamma-\Gamma\|_2$.
The two-sided bound \eqref{eq:variogram_preservation} of
\Cref{thm:variogram_preservation} reads
$\Gamma_{ij}/(1+\varepsilon)\le\widetilde\Gamma_{ij}\le\Gamma_{ij}/(1-\varepsilon)$,
which translates to the entrywise bound
$|\widetilde\Gamma_{ij}-\Gamma_{ij}|\le\tfrac{\varepsilon}{1-\varepsilon}\Gamma_{ij}$
(the upper deviation $\varepsilon/(1-\varepsilon)$ dominates the lower
$\varepsilon/(1+\varepsilon)$).  Since $\Gamma$ has nonnegative entries,
monotonicity of the spectral radius on nonnegative matrices (Wielandt's
theorem) gives
$\rho(|\widetilde\Gamma-\Gamma|)\le\tfrac{\varepsilon}{1-\varepsilon}\rho(\Gamma)$,
and the symmetry of both matrices together with $\|A\|_2\le\rho(|A|)$
for symmetric $A$ gives
$\|\widetilde\Gamma-\Gamma\|_2\le\rho(|\widetilde\Gamma-\Gamma|)\le\tfrac{\varepsilon}{1-\varepsilon}\|\Gamma\|_2$.
Hence $\|\widetilde C-C\|_2\le\eta_C$.  Using $\|\Gamma\|_2$ here,
instead of the entrywise Frobenius bound
$\|\widetilde\Gamma-\Gamma\|_F\le\tfrac{\varepsilon}{1-\varepsilon}\|\Gamma\|_F$
that an entry-by-entry summation would give, saves up to a factor of
$\sqrt{d}$ for generic variograms.

\emph{Step 2 (log-determinant and inverse perturbation).}  Write
$M_\Theta:=\|C^{-1}\|_{\mathrm{op}}=1/s_{\min}(C)$, so
$s_i(C)\ge s_{\min}(C)=1/M_\Theta$ for every $i$.  Assuming
$\eta_C M_\Theta<1$, Weyl's inequality for singular values
($|s_i(\widetilde C)-s_i(C)|\le\|\widetilde C-C\|_2\le\eta_C$) gives
\[
s_i(\widetilde C)\ge s_i(C)-\eta_C\ge 1/M_\Theta-\eta_C
=(1-\eta_C M_\Theta)/M_\Theta>0
\quad\text{for every }i,
\]
and, writing $t_i:=s_i(\widetilde C)/s_i(C)-1$,
$|t_i|\le\eta_C/s_i(C)\le\eta_C M_\Theta<1$.  Using
$|\log(1+t)|\le|t|/(1-|t|)$ for $|t|<1$ (immediate from
$\log(1+t)=\int_0^1 t/(1+\tau t)\,d\tau$), each term satisfies
$|\log(s_i(\widetilde C)/s_i(C))|=|\log(1+t_i)|\le|t_i|/(1-|t_i|)\le\eta_C M_\Theta/(1-\eta_C M_\Theta)$.
Summing over the $d+1$ singular values of $C$,
\begin{align*}
\bigl|\log|\det\widetilde C|-\log|\det C|\bigr|
&=\Bigl|\sum_{i=1}^{d+1}\log\bigl(s_i(\widetilde C)/s_i(C)\bigr)\Bigr|\\
&\le\sum_{i=1}^{d+1}|\log(1+t_i)|
\;\le\;(d+1)\,\eta_C M_\Theta/(1-\eta_C M_\Theta),
\end{align*}
where the first equality uses $|\det A|=\prod_i s_i(A)$.
For the inverse perturbation, the resolvent identity
$\widetilde C^{-1}-C^{-1}=C^{-1}(C-\widetilde C)\widetilde C^{-1}$ together
with $\|C^{-1}\|_2=M_\Theta$ and
$\|\widetilde C^{-1}\|_2=1/s_{\min}(\widetilde C)\le M_\Theta/(1-\eta_C M_\Theta)$
gives
\[
\|\widetilde C^{-1}-C^{-1}\|_2\le M_\Theta\cdot\eta_C\cdot M_\Theta/(1-\eta_C M_\Theta)
=\eta_C M_\Theta^2/(1-\eta_C M_\Theta),
\]
whence $|v^\top(\widetilde C^{-1}-C^{-1})v|\le\|v\|_2^2\,\|\widetilde C^{-1}-C^{-1}\|_2
=(\|y\|_2^2+1)\,\eta_C M_\Theta^2/(1-\eta_C M_\Theta)$
(using $\|v(y)\|_2^2=\|y\|_2^2+1$).
Combining the two bounds yields \eqref{eq:loglambda_pointwise}.  The
block bound $M_\Theta\le\|\Theta\|_{\mathrm{op}}+\|p\|_2+|\sigma^2|$ in
\eqref{eq:M_C_def} follows by triangle inequality on the three-block
decomposition $\mathrm{CM}(\Gamma)^{-1}=\mathrm{diag}(\Theta,0)+\bigl(\begin{smallmatrix}0&p\\p^\top&0\end{smallmatrix}\bigr)+\mathrm{diag}(0,\sigma^2)$,
whose summands have operator norms $\|\Theta\|_{\mathrm{op}}$, $\|p\|_2$,
and $|\sigma^2|$.
The sample-mean bound \eqref{eq:loglambda_sample} follows by averaging the
pointwise bound over $Y_1,\ldots,Y_n$.
\end{proof}

\section{Tuning-parameter sweeps for the Danube experiment}\label{sec:appendix_sweeps}

\begin{table}[ht]
\centering
\footnotesize
\begin{minipage}[t]{0.32\linewidth}
\centering
(a) Spectral-HR ($\eta$)\\[0.4em]
\resizebox{\linewidth}{!}{%
\begin{tabular}{rrrr}
\toprule
$\eta$ & Edges & BIC & AIC \\
\midrule
$1.10$   & $31$ & $1{,}521.17$            & $1{,}435.52$ \\
$1.20$   & $33$ & $1{,}364.65$            & $1{,}273.51$ \\
$1.30$   & $34$ & $1{,}342.50$            & $1{,}248.57$ \\
$1.50$   & $35$ & $1{,}304.76$            & $1{,}208.03$ \\
$2.00$   & $38$ & $1{,}285.98$            & $1{,}181.02$ \\
$2.50$   & $39$ & $1{,}281.27$            & $1{,}173.57$ \\
$3.00$   & $42$ & $1{,}284.30$            & $1{,}168.24$ \\
$4.00$   & $46$ & $\mathbf{1{,}264.47}$   & $1{,}137.42$ \\
$5.00$   & $46$ & $1{,}264.51$            & $1{,}137.36$ \\
$6.00$   & $48$ & $1{,}268.55$            & $1{,}135.94$ \\
$8.00$   & $48$ & $1{,}268.56$            & $1{,}135.97$ \\
$10.00$  & $48$ & $1{,}268.54$            & $1{,}135.96$ \\
$15.00$  & $50$ & $1{,}265.87$            & $1{,}127.77$ \\
$20.00$  & $52$ & $1{,}268.46$            & $1{,}124.90$ \\
$25.00$  & $53$ & $1{,}268.75$            & $\mathbf{1{,}122.36}$ \\
$30.00$  & $53$ & $1{,}268.76$            & $1{,}122.35$ \\
$40.00$  & $55$ & $1{,}277.55$            & $1{,}125.62$ \\
$50.00$  & $56$ & $1{,}281.89$            & $1{,}127.23$ \\
$75.00$  & $57$ & $1{,}287.81$            & $1{,}130.40$ \\
$100.00$ & $60$ & $1{,}302.26$            & $1{,}136.53$ \\
\bottomrule
\end{tabular}}
\end{minipage}\hfill
\begin{minipage}[t]{0.32\linewidth}
\centering
(b) eglearn-NS ($\rho$)\\[0.4em]
\resizebox{\linewidth}{!}{%
\begin{tabular}{rrrr}
\toprule
$\rho$ & Edges & BIC & AIC \\
\midrule
$0.005$  & $364$           & $2{,}012.63$          & $1{,}007.19$ \\
$0.014$  & $184$           & $1{,}342.91$          & $\phantom{1{,}}834.65$ \\
$0.023$  & $130$           & $1{,}179.36$          & $\phantom{1{,}}\mathbf{820.27}$ \\
$0.032$  & $\phantom{1}99$ & $\mathbf{1{,}176.96}$ & $\phantom{1{,}}903.52$ \\
$0.041$  & $\phantom{1}89$ & $1{,}281.88$          & $1{,}036.02$ \\
$0.050$  & $\phantom{1}82$ & $1{,}309.73$          & $1{,}083.25$ \\
$0.091$  & $\phantom{1}56$ & $1{,}519.46$          & $1{,}364.76$ \\
$\ge 0.132$ & ---          & ---                   & --- \\
\bottomrule
\end{tabular}}
\end{minipage}\hfill
\begin{minipage}[t]{0.32\linewidth}
\centering
(c) eglearn-glasso ($\rho$)\\[0.4em]
\resizebox{\linewidth}{!}{%
\begin{tabular}{rrrr}
\toprule
$\rho$ & Edges & BIC & AIC \\
\midrule
$0.005$ & $291$ & $1{,}694.26$          & $\phantom{1{,}}890.46$ \\
$0.044$ & $171$ & $\mathbf{1{,}342.77}$ & $\phantom{1{,}}\mathbf{870.42}$ \\
$0.083$ & $151$ & $1{,}419.73$          & $1{,}002.66$ \\
$0.123$ & $146$ & $1{,}531.34$          & $1{,}128.08$ \\
$0.162$ & $160$ & $1{,}538.49$          & $1{,}096.53$ \\
$0.201$ & $167$ & $1{,}556.78$          & $1{,}095.49$ \\
$0.240$ & $174$ & $1{,}577.94$          & $1{,}097.32$ \\
$0.279$ & $171$ & $1{,}571.67$          & $1{,}099.38$ \\
$0.319$ & $170$ & $1{,}543.69$          & $1{,}074.12$ \\
$0.358$ & $170$ & $1{,}558.03$          & $1{,}088.46$ \\
$0.397$ & $162$ & $1{,}566.80$          & $1{,}119.37$ \\
$0.436$ & $158$ & $1{,}544.02$          & $1{,}107.65$ \\
$0.476$ & $150$ & $1{,}567.54$          & $1{,}153.21$ \\
$0.515$ & $134$ & $1{,}561.81$          & $1{,}191.68$ \\
$0.554$ & $128$ & $1{,}709.41$          & $1{,}355.88$ \\
$0.593$ & $117$ & $1{,}707.47$          & $1{,}384.33$ \\
$\ge 0.632$ & ---   & ---                   & --- \\
\bottomrule
\end{tabular}}
\end{minipage}
\caption{Full tuning-parameter sweeps on the Danube data for the
three tunable methods: edge count, BIC and AIC \eqref{eq:bic_aic_hr}
at every grid point.  (a) Spectral-HR over the BSS parameter
$\eta$, (b) eglearn-NS over the $\ell_1$ parameter $\rho$, (c)
eglearn-glasso over $\rho$.  Bold marks the parameter selected by
each criterion.  Rows listed as ``---'' (in (b) for $\rho\geq 0.132$
and in (c) for $\rho\geq 0.632$) correspond to grid points where the
estimated graph is disconnected and $\widehat\Gamma$ cannot be
completed.}
\label{tab:danube_paths}
\end{table}

\end{document}